\newtheorem{observation}{Observation}
\def\extraspacing{\vspace{2mm} \noindent}
\def\figcapup{\vspace{-1mm}}
\def\figcapdown{\vspace{-0mm}}
\def\vgap{\vspace{0mm}}
\newtheorem{theorem}{Theorem}
\newtheorem{lemma}{Lemma}
\def\bm{\boldmath}
\def\eps{\epsilon}
\def\fr{\frac}
\def\-{\mbox{-}}
\def\real{\mathbb{R}}
\def\lc{\lceil}
\def\lf{\lfloor}
\def\rc{\rceil}
\def\rf{\rfloor}
\def\nn{\nonumber}
\def\*{\star}
\DeclareMathOperator*{\polylg}{polylg}
\DeclareMathOperator*{\intr}{\cap}
\def\figcapup{\vspace{-3mm}}
\def\figcapdown{\vspace{-0mm}}
\def\extraspacing{\vspace{4mm} \noindent}
\def\vgap{\vspace{1mm}}
\def\euT{\EuScript{T}}
\def\nil{\mathit{nil}}
\def\pilot{\mathit{pilot}}
\def\score{\mathit{score}}
\def\G{\textrm{\bm$G$}}
\begin{document}
\begin{sloppy}

\title{A Dynamic I/O-Efficient Structure\\ for One-Dimensional Top-k Range Reporting\thanks{This paper supersedes an earlier version on arXiv with the title ``On Top-k Search and Range Reporting''.}}

\author{
    Yufei Tao \\[2mm]
    Department of Computer Science and Engineering \\
	Chinese University of Hong Kong \\
	Hong Kong \\
	{\em taoyf@cse.cuhk.edu.hk} 
}

\date{}

\maketitle
\vspace{15mm}
\begin{abstract}
	We present a structure in external memory for {\em top-$k$ range reporting}, which uses linear space, answers a query in $O(\lg_B n + k/B)$ I/Os, and supports an update in $O(\lg_B n)$ amortized I/Os, where $n$ is the input size, and $B$ is the block size. This improves the state of the art which incurs $O(\lg^2_B n)$ amortized I/Os per update. 
\end{abstract}

\vspace{80mm}

\thispagestyle{empty}
\setcounter{page}{0}
\pagebreak

\section{Introduction} \label{sec:intro}

In the {\em top-$k$ range reporting problem}, the input is a set $S$ of $n$ points in $\real$, where each point $e \in S$ carries a distinct\footnote{This is a standard assumption \cite{abz11, st12} to guarantee the uniqueness of a top-$k$ result. See \cite{st12} for two semantic extensions to remove the assumption, and how to reduce those extensions to the standard top-$k$ problem with distinct weights.} real-valued {\em score}, denoted as $\score(e)$. Given an interval $q = [x_1, x_2]$ and an integer $k$, a query returns the $k$ points in $S(q) = S \intr q$ with the highest scores. If $|S(q)| < k$, the entire $S(q)$ should be returned. The goal is to store $S$ in a structure so that queries can be answered efficiently.

\extraspacing {\bf Motivation.} Top-$k$ search in general is widely acknowledged as an important operation in a large variety of information systems (see an excellent survey \cite{ibs08}). It plays a central role in applications where an end user wants only a small number of elements with the best competitive quality, as opposed to all the elements satisfying a query predicate. Top-$k$ range reporting---being an extension of classic range reporting---is one of the most fundamental forms of top-$k$ search. A representative query on a hotel database is ``{\em find the 10 best-rated hotels whose prices are between 100 and 200 dollars per night}''. Here, each point $e \in S$ represents the price of a hotel, with $\score(e)$ corresponding to the hotel's user rating. In fact, queries like the above are so popular that database systems nowadays strive to make them first-class citizens with direct algorithm support. This calls for a space-economic structure that can guarantee attractive query and update efficiency. 

\extraspacing {\bf Computation Model.} We study the problem in the {\em external memory} (EM) model \cite{av88}. A machine is equipped with $M$ words of memory, and a disk of unbounded size that has been formatted into {\em blocks} of size $B$ words. An I/O either reads a block of data from the disk to memory, or conversely, writes $B$ words in memory to a disk block. The {\em space} of a structure is the number of blocks it occupies, whereas the {\em time} of an algorithm is the number of I/Os it performs. CPU calculation is for free. A word has  $\Omega(\lg n)$ bits, where $n \ge B$ is the input size of the problem at hand. The values of $M$ and $B$ satisfy the condition $M = \Omega(B)$.\footnote{$M$ can be as small as $2B$ in the model defined in \cite{av88}. However, any algorithm that works on $M = cB$ with constant $c > 2$ can be adapted to work on $M = 2B$ with only a constant blowup in space and time. Therefore, one might as well consider that $M = \Omega(B)$.} 

\vgap 

Throughout this paper, a space/time complexity holds in the worst case by default. A logarithm $\lg_b x$ is defined as $\max\{1, \log_b x\}$, and  $b = 2$ if omitted. {\em Linear} cost should be understood as $O(n/B)$ whereas {\em logarithmic} cost as $O(\lg_B n)$.

\subsection{Previous Work} \label{sec:intro-pre}

Top-$k$ range reporting was first studied by Afshani, Brodal and Zeh \cite{abz11}, who gave a static structure of $O(n/B)$ space that answers a query in $O(\lg_B n + k/B)$ I/Os. The query cost is optimal, as can be shown via a reduction from predecessor search \cite{pt06}. They also analyzed the space-query tradeoff for an {\em ordered} variant of the problem, where the top-$k$ elements need to be sorted  by score. Their result suggests that when the space usage is linear, one can achieve nearly the best query efficiency by simply solving the unordered version in $O(\lg_B n + k/B)$ I/Os, and then sorting the retrieved elements (see \cite{st12} for more details). For the unordered version, Sheng and Tao \cite{st12} proposed a dynamic structure that has the same space and query cost as \cite{abz11}, but supports an update in $O(\lg^2_B n)$ amortized I/Os.

\vgap

In internal memory, by combining a priority search tree \cite{m85} and Frederickson's selection algorithm \cite{f93} on heaps, one can obtain a pointer-machine structure that uses $O(n)$ words, answers a query in $O(\lg n + k)$ time, and supports an update in $O(\lg n)$ time. In RAM, Brodal, Fagerberg, Greve and Lopez-Ortiz \cite{bfgl09} considered a special instance of the problem where the input points of $S$ are from the domain $[1, n]$. They gave a linear-size structure with $O(1+k)$ query time (which holds also for the ordered version). 

\vgap 

It is worth mentioning that top-$k$ search has received considerable attention in many other contexts. We refer the interested readers to recent works \cite{mn12, sstv13} for entry points into the literature. 

\subsection{Our Results} \label{sec:intro-ours}

We improve the state of the art \cite{st12} by presenting a new structure with logarithmic update cost:

\begin{theorem} \label{thm:main-final}
    For top-$k$ range reporting, there is a structure of $O(n/B)$ space that answers a query in $O(\lg_B n + k/B)$ I/Os, and supports an insertion and a deletion in $O(\lg_B n)$ I/Os amortized.
\end{theorem}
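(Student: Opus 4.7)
The plan is to build a weight-balanced B-tree $\T$ on the $x$-coordinates of $S$, with branching factor $\Theta(B)$ and leaf capacity $\Theta(B)$, so that $\T$ has height $O(\lg_B n)$ and any internal node of weight $w$ is rebuilt only after $\Omega(w)$ updates have touched its subtree. At every internal node $v$ I attach a \emph{certificate}: the $\Theta(B)$ points of $S(v)$ with the highest scores, packed into one block and sorted by score. Leaves simply store their points. This way every certificate occupies exactly one disk block and can be consulted with a single I/O.

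For a query $(q=[x_1,x_2],k)$, I would first compute the canonical decomposition of $q$ into $O(\lg_B n)$ maximal subtree roots $v_1,\dots,v_t$, costing $O(\lg_B n)$ I/Os on the tree. I then load the certificates of $v_1,\dots,v_t$ and run an external-memory Frederickson-style heap selection across these $t$ sorted blocks to extract the top $k$ scores. As long as no certificate is exhausted, this costs $O(\lg_B n+k/B)$ I/Os because every ``wanted'' point already sits in one of the blocks we have read. Whenever a certificate of some $v_i$ is exhausted before $k$ points have been produced, I would replace $v_i$ in the heap by its children's certificates and recurse; a standard charging argument (each recursion step yields $\Theta(B)$ new output points) keeps the total query cost at $O(\lg_B n+k/B)$.

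Updates are the delicate part. Insertions are easy: walking the root-to-leaf path for the new point costs $O(\lg_B n)$ I/Os, and at each of the $O(\lg_B n)$ ancestors $v$ I can splice the new point into the certificate with $O(1)$ I/Os (the certificate fits in one block, so a comparison with its current minimum suffices). Deletions, however, cannot be done in $O(1)$ per ancestor: removing a point that currently lives in the certificate requires a replacement drawn from $S(v)$ outside the certificate, and such a replacement is not locally accessible. My plan is to delete \emph{lazily}---mark the dead point inside the certificate and move on---and to piggy-back on the weight-balanced rebuild schedule: whenever $\T$'s invariants force a node $v$ of weight $w$ to be rebuilt, also recompute its certificate from scratch using the static $O(w/B)$ construction of Afshani--Brodal--Zeh~\cite{abz11}. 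This amortizes to $O(1/B)$ I/Os per node per update, i.e., $O((\lg_B n)/B)$ per update overall, which is absorbed into the path cost. Whenever a certificate collects more than a constant fraction of dead entries before the next natural rebalancing, I would trigger a forced rebuild of the offending subtree under the same amortization.

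The hard step, where I expect the paper's $\pilot$/$\pIns$/$\pDel$/$\xIns$/$\xDel$ machinery to be doing its real work, is reconciling two requirements that interact awkwardly: the certificates must still \emph{dominate} the live part of each subtree so that queries remain correct after many lazy deletes, and the bookkeeping must not cost a second $\lg_B n$ factor---otherwise we would only match \cite{st12}. Concretely, one has to argue that (i) at least $\Omega(B)$ live entries always remain in each certificate, so the recursive query step still extracts $\Theta(B)$ new output points per block read, and (ii) the rebuild threshold interacts cleanly with insertions that kick high-scoring dead entries out (rather than live ones) so that invariant (i) is preserved between rebuilds. Once those invariants are in place, the query and update bounds fall out of the Frederickson selection analysis plus the weight-balanced amortization, yielding the claimed $O(\lg_B n + k/B)$ query cost, $O(\lg_B n)$ amortized updates, and $O(n/B)$ space.
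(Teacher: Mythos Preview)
Your proposal has two genuine gaps, and the paper's route is quite different from what you sketch.

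First, a counting error undermines the query bound. In a tree with branching factor $\Theta(B)$, the canonical decomposition of an interval produces $\Theta(B\lg_B n)$ maximal subtree roots, not $O(\lg_B n)$: at each of the $O(\lg_B n)$ levels along the two search paths you pick up up to $\Theta(B)$ fully-contained children. Merely touching their certificates already costs $\Theta(B\lg_B n)$ I/Os. The paper's Lemma~\ref{lmm:bigk-main} structure addresses exactly this by hanging a \emph{binary} secondary tree $\euT(u)$ under every internal node $u$ and running Frederickson on the concatenated constant-fanout tree {\bm$T$} of height $O(\lg n)$; that is precisely why its query cost is $O(\lg n + k/B)$ with a base-$2$ logarithm, and why that structure is invoked only when $k \ge B\lg n$ so that the $\lg n$ term is absorbed into $k/B$.

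Second, your lazy-deletion scheme does not amortize. The certificate at a node of weight $w$ holds only $\Theta(B)$ points; an adversary deleting the $\Theta(B)$ highest-scoring elements of $S(v)$ empties it after $\Theta(B)$ updates, long before the weight-balanced schedule (which waits for $\Omega(w)$ updates) triggers a rebuild. Your ``forced rebuild'' then needs the current top $\Theta(B)$ live points of the subtree, but you cannot read them off the children's certificates---those are equally stale, since every deleted high-score point sat in the certificate of each ancestor \emph{and} in that of the child containing it---so you are forced to scan $\Omega(w/B)$ blocks and charge that to only $\Theta(B)$ deletions. The paper avoids this entirely: for large $k$ it maintains \emph{disjoint} pilot sets with explicit pull-up operations whose cost is amortized to $O(\tfrac{1}{B}\lg n)$ per update via a token argument (Lemma~\ref{lmm:bigk-inv}); for the remaining regime it does not use certificates at all but combines the structure of \cite{st12} (whose update cost is already $O(\lg_B n)$ when $\lg n \le B^{1/6}$) with a separate bit-packed sketch structure (Lemma~\ref{lmm:smallk-smallk}) that handles $k = O(\polylg n)$ when $B$ is small. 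The three-way case split, not a single clever certificate invariant, is the missing idea.
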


We achieve logarithmic updates by combining three methods. The first one adapts the aforementioned pointer machine structure---which combines a priority search tree with Frederickson's heap selection algorithm---to external memory. This gives a linear-size structure that can be updated in $O(\lg_B n)$ amortized I/Os, but answers a query in $O(\lg n + k/B)$ I/Os (note that the log base is 2). We use the structure to handle $k \ge B \lg n$ in which case its query cost is $O(\lg n + k/B) = O(k/B)$.

\vgap

The second method applies directly the structure of \cite{st12}. Looking at their analysis carefully, one sees that their amortized update cost is in fact $O(\lg_B n + \fr{\lg n}{B^{1/6}} \lg_B n)$. In other words, when $\lg n \le B^{1/6}$, the structure already achieves logarithmic update cost. 

\vgap

The most difficult case arises when $\lg n > B^{1/6}$, or equivalently, $B < \lg^6 n$. We observe that, since $k \ge B \lg n$ has already been taken care of, it remains to target $k < B \lg n < \lg^7 n$. Motivated by this, we develop a linear-size structure that can be updated in $O(\lg_B n)$ I/Os, and answers a query with $k = O(\polylg n)$ in $O(\lg_B n + k/B)$ I/Os. The most crucial idea behind this structure is to use a suite of ``RAM-reminiscent'' techniques to unleash the power of manipulating individual bits. 

\vgap 

Theorem~\ref{thm:main-final} can now be established by putting together the above three structures using standard global rebuilding techniques.

\section{A Structure for \bm{\large ${k} = \Omega(B \lg n)$}} \label{sec:bigk}

In this section, we will prove:

\begin{lemma} \label{lmm:bigk-main}
    For top-$k$ range reporting, there is a structure of $O(n/B)$ space that answers a query in $O(\lg n + k/B)$ I/Os, and supports an insertion and a deletion in $O(\lg_B n)$ I/Os amortized.
\end{lemma}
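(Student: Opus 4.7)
The approach adapts the classical internal-memory construction --- a priority search tree (PST) coupled with Frederickson's heap selection --- to external memory. I would build an EM priority search tree $T$ on the $x$-coordinates of $S$, organized as a (weight-balanced) B-tree with fan-out $\Theta(B)$ and height $O(\lg_B n)$. At each internal node $v$ with children $c_1,\dots,c_f$ ($f=\Theta(B)$), the block of $v$ stores, for each $c_i$, the maximum-scored point in the subtree rooted at $c_i$ that is not already owned by a proper ancestor of $v$. Propagating maxima upward enforces a heap-on-score invariant across ancestry, while the $x$-partition at each level gives a BST on $x$. Since each point is owned by exactly one node and each node holds $\Theta(B)$ representatives, the total space is $O(n/B)$ blocks.

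\textbf{Updates.} For insertions and deletions, I would invoke the update routine of a dynamic EM PST in the style of Arge--Samoladas--Vitter, which runs in $O(\lg_B n)$ amortized I/Os. The weight-balanced B-tree absorbs structural changes via amortized rebuilding, and the heap-on-score invariant is restored by re-propagating maxima along an $O(\lg_B n)$-length root-to-leaf path after each operation. The standard amortization gives the claimed update bound.

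\textbf{Query.} A query on $q=[x_1,x_2]$ proceeds in two phases. First, descend from the root to $x_1$ and to $x_2$, identifying the canonical forest $\mathcal{F}$ of subtree roots whose $x$-ranges are entirely contained in $q$; this costs $O(\lg_B n)$ I/Os and makes the representatives of $\mathcal{F}$'s roots available on the search paths. Second, run a blocked external-memory adaptation of Frederickson's heap-selection algorithm on $\mathcal{F}$, viewed as a heap on scores in which each node carries $\Theta(B)$ representatives. Because every block access surfaces $\Theta(B)$ candidates at once, Frederickson's top-$k$ procedure --- which classically runs in $O(k)$ time on a binary heap --- can be charged $O(k/B)$ block reads for the ``heap traversal'' component, plus an additive $O(\lg n)$ term absorbing the overhead of Frederickson's internal bookkeeping (candidate sifting and priority-queue operations). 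This yields the target query cost $O(\lg n + k/B)$.

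\textbf{Main obstacle.} The delicate step is the blocked adaptation of Frederickson's selection so that the I/O cost telescopes to $O(\lg n + k/B)$ rather than $O((k/B)\lg k)$. Classical Frederickson charges $O(1)$ per heap element examined; in EM I must instead charge $O(1/B)$ per element by batching accesses into block reads and ensuring that the candidate-set management (the recursive halving / priority queue) runs in EM-friendly fashion. Once this blocked-Frederickson bound is in place, combining it with the $O(\lg_B n)$-I/O EM PST update and the $O(n/B)$ space bound immediately yields Lemma~\ref{lmm:bigk-main}.
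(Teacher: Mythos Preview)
Your update sketch is plausible for the $B$-ary structure you describe, but the query argument has a genuine gap. You assert that because ``every block access surfaces $\Theta(B)$ candidates at once,'' Frederickson's selection can be charged $O(k/B)$ block reads. That inference fails for your structure: at a node $v$ with children $c_1,\dots,c_f$, the $\Theta(B)$ stored points are the top-$1$ points of $B$ \emph{disjoint} child subtrees, not the top-$\Theta(B)$ points of the whole subtree below $v$. Hence the second-highest point in the slab may already sit in a child block, the third in a grandchild block, and so on. Concretely, let the adversary make all $B$ root representatives the top-$B$ points overall; then, for each child $c_i$, make exactly one of its representatives the next-highest point, and iterate this pattern down the levels. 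After $h=\Theta(\lg_B n)$ levels you have selected $k=\Theta(B\lg_B n)$ points, yet any correct selection must read $\Theta(B\lg_B n)$ distinct node blocks. The target bound $O(\lg n + k/B)=O(\lg n)$ is thus exceeded by a factor of $\Theta(B/\lg B)$. ``Surfacing $\Theta(B)$ candidates per block'' is not the same as ``surfacing $\Theta(B)$ of the \emph{next} answers per block,'' and only the latter would justify the $O(k/B)$ charge.

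The paper closes this gap by storing at each node a \emph{pilot set} of $\Theta(B)$ points under the stronger invariant that every pilot point at a parent dominates every pilot point at each child. These pilot sets live on an essentially binary overlay tree of height $O(\lg n)$ (built by threading a small binary tree through the children of every WBB-node), so Frederickson runs on a \emph{binary} heap whose nodes are pilot-set representatives; extracting $O(\lg n + k/B)$ such nodes and reading their pilot sets yields $\Omega(k)$ candidates in $O(\lg n + k/B)$ I/Os. The price is that the overlay has height $O(\lg n)$, so an off-the-shelf update no longer gives $O(\lg_B n)$: the paper needs batched push-downs and pull-ups together with a token-charging argument to amortize the cascades to $O(\tfrac{1}{B}\lg n)\le O(\lg_B n)$ I/Os, and it stores separate ``representative blocks'' at each WBB-node so that locating the correct pilot set along a root-to-leaf path costs $O(\lg_B n)$ rather than $O(\lg n)$. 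Your appeal to an ASV-style update routine does not transfer, because the structure that actually makes the query bound work is not the $B$-ary PST you set up.
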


Top-$k$ range reporting has a geometric interpretation. We can convert $S$ to a set $P$ of points, by mapping each element $e \in S$ to a 2d point $(e, \score(e))$. Then, a top-$k$ query with $q = [x_1, x_2]$ equivalently reports the $k$ highest points of $P$ in the vertical slab $q \times (-\infty, \infty)$. This is the perspective we will take to prove Lemma~\ref{lmm:bigk-main}. 

\vgap 

Our structure is essentially an {\em external priority search tree} \cite{asv99} on $P$ with a constant fanout. However, we make two contributions. First, we develop an algorithm using this structure to answer top-$k$ range queries. Second, we explain how to update the structure in $O(\lg_B n)$ I/Os. Note that an update by the standard algorithm of \cite{asv99} requires $O(\lg n)$ I/Os.  

\begin{figure}
	\begin{center}
		\includegraphics[height=40mm]{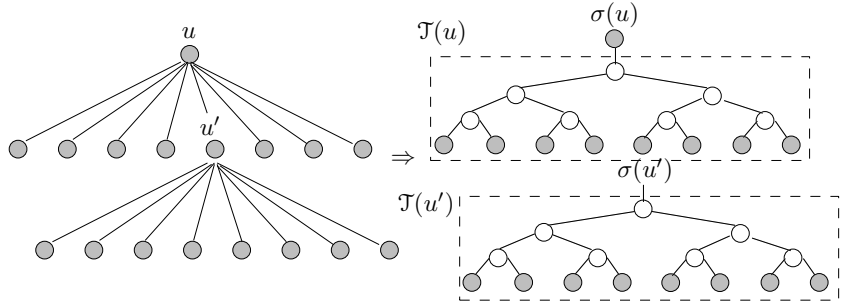}
	\end{center}
	\figcapup
	\caption{Concatenating secondary binary trees}
	\label{fig:bigk-concat}
	\figcapdown
\end{figure}

\extraspacing {\bf Structure.} Let $T$ be a {\em weight balanced B-tree} (WBB-tree) \cite{av03} on the x-coordinates of the points in $P$. The leaf capacity and branching parameter of $T$ are both set to $B$. We number the levels of $T$ bottom up, with the leaves at level 0. For each node $u$ in $T$, we use $P(u)$ to denote the set of points whose x-coordinates are stored in the subtree of $u$. As a property of the WBB-tree, if $u$ is at level $i$, then $|P(u)|$ falls between $B^{i+1}/4$ and $B^{i+1}$; if $|P(u)|$ is outside this range, $u$ becomes {\em unbalanced} and needs to be remedied.

\vgap

Each node $u$ naturally corresponds to a vertical {\em slab} $\sigma(u)$ with $P(u) = \sigma(u) \cap P$.\footnote{Precisely, the slab of a leaf node $u$ is $[x, x') \times (-\infty, \infty)$ where $x$ is the smallest x-coordinate stored at $u$, and $x'$ is the smallest x-coordinate in the leaf node $u'$ succeeding $u$. If $u'$ does not exist, $x' = \infty$. The slab of an internal node unions those of all its child nodes.} Let $u, u'$ be child nodes of the same parent. We say that $u'$ is a {\em right sibling} of $u$ if $\sigma(u')$ is to the right of $\sigma(u)$. Otherwise, $u'$ is a {\em left sibling} of $u$. Note that a node can have multiple left/right siblings, or none (if it is already the left/right most child). 

\vgap

Consider now $u$ as an internal node with child nodes $u_1, ..., u_f$ where $f = O(B)$ (we always follow the left-to-right order in listing out child nodes). We associate $u$ with a binary search tree $\euT(u)$ of $f$ leaves, which correspond to $\sigma(u_1), ..., \sigma(u_f)$, respectively. Let $v$ be an internal node in $\euT(u)$. We define $\sigma(v) = \cup_{j=j_1}^{j_2} \sigma(u_j)$, where $\sigma(u_{j_1}), \sigma(u_{j_1+1}), ..., \sigma(u_{j_2})$ are the leaves of $\euT(u)$ below $v$, and accordingly, define $P(v) = \sigma(v) \intr P$.  

\vgap 

Notice that we can view $T$ insteads as one big tree {\bm${T}$} that concatenates the secondary binary trees of all the nodes in $T$. Specifically, if $u'$ is a child of $u$ in $T$, the concatenation makes the root of $\euT(u')$ the only child of the leaf $\sigma(u')$ of $\euT(u)$. See Figure~\ref{fig:bigk-concat}. {\bm${T}$} is almost a binary tree except that some internal nodes have only one child which is an internal node itself. However, this is only a minor oddity because any path in {\bm$T$} of 3 nodes must contain at least one node with two children. The height of {\bm$T$} is $O(\lg n)$. 

\vgap

Each node $v$ in {\bm$T$} is associated with a set---denoted as $\pilot(v)$---of {\em pilot points} satisfying two conditions:
\begin{itemize} 
    \item The points of $\pilot(v)$ are the highest among all points $p \in P(v)$ that are not stored in any $\pilot(\hat{v})$, where $\hat{v}$ is a proper ancestor of $v$ in {\bm$T$}.
    
    \item If less than $B/2$ points satisfy the above condition, $\pilot(v)$ includes all of them. Otherwise, $B/2 \le |\pilot(v)| \le 2B$. In any case, $\pilot(v)$ is stored in $O(1)$ blocks. 
\end{itemize}
The lowest point in $\pilot(v)$ is called the {\em representative} of $\pilot(v)$.

\vgap 

Finally, for each internal node $u$ in $T$, we collect the representatives of the pilot sets of all the nodes in $\euT(u)$, and store these $O(B)$ representatives in $O(1)$ blocks---referred to as the {\em representative blocks} of $u$. 

\extraspacing {\bf Query.} Given a top-$k$ query with range $q = [x_1, x_2]$, we descend two root-to-leaf paths $\pi_1$ and $\pi_2$ in {\bm$T$} to reach the leaf nodes $z_1$ and $z_2$ whose slabs' x-ranges cover $x_1$ and $x_2$, respectively. In $O(\lg n)$ I/Os, we retrieve all the $O(B \lg n)$ pilot points of the nodes on $\pi_1 \cup \pi_2$, and eliminate those outside $q \times (-\infty, \infty)$. Let $Q_1$ be the set of remaining points.

\begin{figure}
	\begin{center}
		\includegraphics[height=50mm]{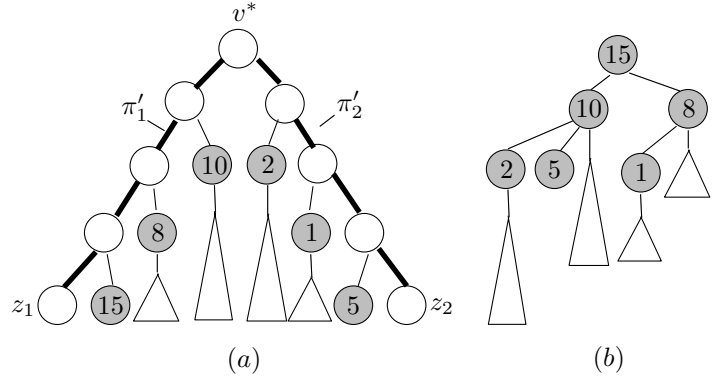}
	\end{center}
	\figcapup
	\caption{The gray nodes in Figure (a) constitute set $\Pi$. Each number is a node's sorting key in the heap rooted at that node. Figure (b) shows $H$ after heap concatenation.}
	\label{fig:bigk-heap}
	\figcapdown
\end{figure}

\vgap

Let $v^*$ be the least common ancestor of $z_1$ and $z_2$. Define $\pi_1'$ ($\pi_2'$) as the path from $v^*$ to $z_1$ ($z_2$). Let $\Pi$ be the set of nodes $v$ satisfying two conditions: 
\begin{itemize} 
    \item []
\begin{itemize}
    \item [(i)] $v \notin \pi_1' \cup \pi_2'$, but the parent of $v$ is in $\pi_1' \cup \pi_2'$;
    \item [(ii)] The x-range of $\sigma(v)$ is covered by $q$. 
\end{itemize}
\end{itemize}
For every such $v$, we can regard its subtree as a max-heap $H(v)$ as follows. First, $H(v)$ includes all the nodes $v'$ in the subtree of $v$ (in {\bm$T$}) with non-empty pilot sets. Second, the sorting key of $v'$ is the y-coordinate of the representative of $\pilot(v')$. In this way, we have identified at most $|\Pi|$ non-empty max-heaps, each rooted at a distinct node in $\Pi$. Concatenate these heaps into one, by organizing their roots into a binary max-heap based on the sorting keys of those roots. This can be done in $O(\lg n)$ I/Os\footnote{Using a linear-time ``make-heap'' algorithm; see \cite{clrs01}.}. Denote by $H$ the resulting max-heap after concatenation. See Figure~\ref{fig:bigk-heap}. 

\vgap

Set $\phi$ to a sufficiently large constant. We now invoke Frederickson's algorithm to extract the set $R$ of $\phi \cdot (\lg n + k/B)$ representatives in $H$ with the largest y-coordinates; this entails $O(\lg n + k/B)$ I/Os. Let $S_R$ be the set of nodes whose representatives are collected in $R$. Gather all the pilot points of the nodes of $S_R$ into a set $Q_2$. 

\vgap

Define a set $S_R^*$ of nodes as follows. For each node $v \in S_R$, we first add to $S_R^*$ all such siblings $v'$ of $v$ (in {\bm$T$}) that (i) $v' \notin S_R$, and (ii) the x-range of $\sigma(v')$ is contained in $q$. Second, if $v$ is an internal node, add all its child nodes in {\bm$T$} to $S_R^*$. Note that $|S_R^*| = O(|S_R|) = O(\lg n + k/B)$. We now collect the pilot points of all the nodes of $S_R^*$ into a set $Q_3$.   

\vgap 

At this moment, we have collected three sets $Q_1, Q_2, Q_3$ with a total size of $O(B\log n + k)$. We can now report the $k$ highest points in $Q_1 \cup Q_2 \cup Q_3$ in $O(\lg n + k/B)$ I/Os. The query algorithm performs $O(\lg n + k/B)$ I/Os in total. Its correctness is ensured by the fact below: 

\begin{lemma} 
    Setting $\phi = 16$ ensures that $Q_1 \cup Q_2 \cup Q_3$ includes the $k$ highest points in $q \times (-\infty, \infty)$.
\end{lemma}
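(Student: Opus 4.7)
My plan is to show that every top-$k$ point $p$ in the slab $q \times (-\infty,\infty)$ must end up in $Q_1 \cup Q_2 \cup Q_3$. Fix such a $p$ and let $v^\dagger \in \T$ be the unique node with $p \in \pilot(v^\dagger)$; write $p_k$ for the $y$-coordinate of the $k$-th highest point in the slab. I would first dispose of the easy dichotomy on $v^\dagger$. If $v^\dagger \in \pi_1 \cup \pi_2$, then $p$ is one of the $O(B \lg n)$ pilot points the algorithm explicitly retrieves from the two root-to-leaf paths and, because $p \in q$, it survives the filter into $Q_1$. Otherwise, a short walk down the root-to-$v^\dagger$ path shows that the first off-path ancestor $w$ must satisfy $\sigma(w) \subseteq q$: above $v^*$ every off-path child's slab is disjoint from $[x_1,x_2]$, while below $v^*$ each off-path child's slab lies either entirely inside $q$ or entirely outside. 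Hence $w \in \Pi$ and $v^\dagger \in H(w)$.

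The substance of the proof is to show $v^\dagger \in S_R \cup S_R^*$ in this off-path case. Write $r(u)$ for the $y$-coordinate of the representative of $\pilot(u)$, and set
\[
  A = \{u \in H : \pilot(u) \cap \textrm{top-}k \ne \emptyset\}, \qquad B = \{u \in H : r(u) > p_k\}.
\]
Using that pilots are filled top-down and highest-first, I would establish three structural facts: (i) within each $H(v)$ both $A$ and $B$ are upward-closed under the $\T$-parent relation, because the rep of any $\T$-ancestor strictly exceeds every pilot point of its descendants; (ii) any node whose pilot is not full forces empty pilots at all $\T$-descendants, so every $A$- or $B$-internal node in its $H(v)$ must have $|\pilot(u)| \ge B/2$; and (iii) every $A$-internal node actually lies in $B$ with its entire pilot inside top-$k$, since a proper $A$-descendant guarantees some pilot point above $p_k$, and (i) then pushes the ancestor's rep above $p_k$. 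Fact (iii) makes the pilots of $A$-internal nodes disjoint contributors to top-$k$, giving $|A\textrm{-internal}| \le 2k/B$; since each $H(v)$ is essentially binary, the number of its $A$-leaves is at most one more than its $A$-internals, so summing over $\Pi$ yields $|A|,|B| \le 4k/B + |\Pi| = O(k/B + \lg n)$.

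With these counts in hand, $\phi = 16$ makes $t = \phi(\lg n + k/B) \ge |B|$, and because every node of $B$ has strictly larger rep than every node of $H \setminus B$ (by the definition of $p_k$), the top-$t$ extraction $S_R$ swallows $B$ wholesale. Thus $v^\dagger \in B$ immediately gives $v^\dagger \in S_R$ and $p \in Q_2$. Otherwise $v^\dagger \in A \setminus B$, so $v^\dagger$ must be an $A$-leaf and its $\T$-parent (whenever it exists in $H$) must lie in $B \subseteq S_R$; the ``children of internal $S_R$ nodes'' rule of $S_R^*$ then puts $v^\dagger$ into $S_R^*$, so $p \in Q_3$.

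The delicate case I expect to be the main obstacle is when $v^\dagger$ itself is a root of some $H(v)$---so $v^\dagger = v \in \Pi$---with a straddling pilot: the $\T$-parent of $v^\dagger$ then lies on $\pi_1' \cup \pi_2'$ rather than in $H$, so neither $S_R^*$ rule fires directly (the only $\T$-sibling of $v^\dagger$ is on-path and so fails condition $\sigma \subseteq q$). To close this case I would sharpen the counting to $|B| + |\Pi|$ and observe that $\phi = 16$ yields $t \ge |B| + |\Pi|$; combined with the fact that each $\Pi$-root carries the largest rep in its own $H(v)$, the argument then forces such $\Pi$-root straddlers into the top-$t$ extraction, giving $v^\dagger \in S_R$ and $p \in Q_2$ after all.
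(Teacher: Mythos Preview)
Your approach differs substantially from the paper's. The paper never tracks an individual top-$k$ point; it gives a global count showing $|Q_1\cup Q_2|\ge k$. Each of the $t=\phi(\lg n+k/B)$ extracted representatives is labelled \emph{rich} if its pilot set has $\ge B/8$ points and \emph{poor} otherwise. A rich representative contributes its own $\ge B/8$ pilot points to $Q_2$; a poor representative $r$ at node $w$ forces the subtree of $w$ to be empty, so $\ge B/8$ of the $\ge B/4$ points in the leaf storing $r$'s x-coordinate must live in pilots of proper ancestors of $w$, all of which are either on $\pi_1\cup\pi_2$ (hence in $Q_1$) or in $S_R$ (hence in $Q_2$). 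A double-counting step---each point of $Q_1\cup Q_2$ is charged by at most two representatives---gives $|Q_1\cup Q_2|\ge(\phi/16)(B\lg n+k)$, so $\phi=16$ suffices. The paper then finishes with the one-line assertion that $Q_3$ prevents anything in the slab outside $Q_1\cup Q_2\cup Q_3$ from exceeding the lowest point of $Q_1\cup Q_2$. Your per-point route via the sets $A$ and your set ``$B$'' (reusing the symbol $B$ when it already denotes the block size is unfortunate) is a genuinely different and in some ways cleaner decomposition.

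Your fix for the ``delicate case'', however, does not go through. When $v^\dagger\in\Pi$ lies in $A$ but not in your set $B$ (so $r(v^\dagger)\le p_k$), you need $v^\dagger\in S_R$, i.e.\ $r(v^\dagger)$ among the $t$ largest representatives in $H$. The nodes outranking $v^\dagger$ are all $u\in H$ with $r(u)>r(v^\dagger)$, and this collection is \emph{not} contained in your $B$ together with the $\Pi$-roots: a different heap $H(v')$ may contain arbitrarily many nodes whose representatives lie in the interval $(r(v^\dagger),\,p_k]$, none of which belong to your set $B$ and only one of which is in $\Pi$. Knowing that $v^\dagger$ carries the largest representative \emph{within} $H(v^\dagger)$ says nothing about its rank across all of $H$, so the inequality $t\ge|B|+|\Pi|$ is insufficient to place $v^\dagger$ in $S_R$. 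It is worth noting that this $\Pi$-root boundary case is precisely what the paper's terse final sentence glosses over as well; neither argument, as written, cleanly disposes of it.
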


\begin{proof} 
	We will focus on the scenario that the heap $H$ has at least $\phi \cdot (\lg n + k/B)$ representatives. Otherwise, $P$ has $O(B\lg n + k)$ points in $q \times (-\infty, \infty)$, and all of them are in $Q_1 \cup Q_2$ ($Q_3$ is empty).

	\vgap 
	
	We will first show that $|Q_1 \cup Q_2| \ge k$. This is very intuitive because $Q_1 \cup Q_2$ collects the contents of $\Omega(\lg n + k/B)$ pilot sets. However, a formal proof requires some effort because the pilot set of a node $v$ can have arbitrarily few points (in this case all the nodes in the proper subtree of $v$ must have empty pilot sets). We need a careful argument to address this issue. 
    
    \vgap 
    
	We say that a representative in $R$ is {\em poor} if its pilot set has less than $B/8$ points; otherwise, it is {\em rich}. Consider a poor representative $r$ in $R$; and suppose that it is a pilot point of node $v$, and its x-coordinate is stored in leaf node $z$. Note that $z$ stores the x-coordinates of at least $B/4$ points, all of which fall in $q$. By the fact that $r$ represents less than $B/8$ points, we know that at least $B/8$ points (with x-coordinates) in $z$ are pilot points of some proper ancestors of $v$ in {\bm$T$}, and therefore, appear in either $Q_1$ or $Q_2$. We  associate those $B/8$ points with $r$. On the other hand, we associate each rich representative with the at least $B/8$ points in its pilot set. 

	\vgap

	Thus, the $\phi \cdot (\lg n + k/B)$ representatives in $R$ are associated with at least $(\phi / 8) (B\lg n + k)$ points in $Q_1 \cup Q_2$. Each point $p \in Q_1 \cup Q_2$, on the other hand, can be associated with at most 2 representatives: the representative of the node where $p$ is a pilot point, and a poor representative whose x-coordinate is stored in the same leaf as $p$.\footnote{No two poor representatives can have their x-coordinates stored in the same leaf.} This implies $|Q_1 \cup Q_2| \ge (\phi/16)(B \lg n + k)$. Hence, $\phi = 16$ ensures $|Q_1 \cup Q_2| \ge k$. 
	
	\vgap 
	
	Finally, the inclusion of $Q_3$ ensures that no pilot point in $q \times (-\infty, \infty)$ but outside $Q_1 \cup Q_2 \cup Q_3$ can be higher than the lowest point in $Q_1 \cup Q_2$. The lemma then follows. 
\end{proof}

\extraspacing {\bf Insertion.} To insert a point $p$, first update the B-tree $T$ by inserting the x-coordinate of $p$. Let us assume for the time being that no rebalancing in $T$ is required. Then, we identify the node $v$ in {\bm$T$} whose pilot set should incorporate $p$. This can be achieved in $O(\lg_B n)$ I/Os by descending a single root-to-leaf path in $T$ (note: not {\bm$T$}), and inspect the representative blocks of the nodes on the path. We add $p$ to $\pilot(v)$. 

\vgap 

We say that a pilot set {\em overflows} if it has more than $2B$ points. If $\pilot(v)$ overflows, 
we carry out a {\em push-down} operation at $v$, which moves the $|\pilot(v)| - B$ lowest points of $\pilot(v)$ to the pilot sets of its at most 2 child nodes in {\bm$T$}. The resulting $\pilot(v)$ has size $B$. If the pilot set of a child $v'$ now overflows, we treat it in the same manner by performing a push-down at $v'$. We will analyze the cost of push-downs later.


\extraspacing {\bf Deletion.} To delete a point $p$, we identify the node $v$ in {\bm$T$} whose pilot set contains $p$. This can be done in $O(\lg_B n)$ I/Os by inspecting the representative blocks. We then remove $p$ from $\pilot(v)$. 

\vgap 

We say that a pilot set {\em underflows} if it has less than $B/2$ points, and yet, one of its child nodes has a non-empty pilot set. To remedy this, we define a {\em pull-up} operation at node $v'$ in {\bm$T$} as one that moves the $\min\{B/2, B - |\pilot(v')|\}$
highest points from 
\begin{eqnarray} 
    \bigcup_{\textrm{child $v''$ of $v'$ in {\bm$T$}}} \pilot(v'') \label{eqn:bigk-childunion} 
\end{eqnarray}
to $\pilot(v')$. If \eqref{eqn:bigk-childunion} has less than the requested number of points, the pull-up moves all the points of \eqref{eqn:bigk-childunion} into $\pilot(v)$, after which all proper descendants of $v'$ have empty subsets; we call such a pull-up a {\em draining} one. 

\vgap 

In general, if the pilot set of a node $v'$ underflows, we carry out at most two pull-ups at $v'$ until either $|\pilot(v')| = B$, or a draining pull-up has been performed. After the first pull-up, if the pilot set at a child node of $v'$ underflows, we should remedy that first (in the same manner recursively) before continuing with the second pull-up at $v'$. We will analyze the cost of pull-ups later. 

\vgap 

It is worth mentioning that we do not remove the x-coordinate of $p$ from the base tree $T$. This does not create a problem because we will rebuild the whole $T$ periodically, as clarified later. 


\extraspacing {\bf Rebalancing.} It remains to clarify how to rebalance $T$. Let $u^\*$ be the highest node in $T$ that becomes unbalanced after inserting $p$. Let $\hat{u}$ be the parent of $u^\*$. We rebuild the whole subtree of $\hat{u}$ in $T$, and the corresponding portion in {\bm$T$}. Let $l$ be the level of $\hat{u}$ in $T$. Our goal is to complete the reconstruction in $O(B^l)$ I/Os. A standard argument with the WBB-tree shows that every insertion accounts for $O(\lg_B n)$ I/Os of all the reconstructions. 

\vgap 

Let $\hat{v}$ be the root of $\euT(\hat{u})$. Essentially, we need to rebuild the subtree of $\hat{v}$ in {\bm$T$}, which has $O(B^l)$ nodes. The first step of our algorithm is to distribute all the pilot points stored in the subtree of $\hat{v}$ down to the leaves where their x-coordinates are stored, respectively. For this purpose, we simply push down all the pilot points of $\hat{v}$ to its child nodes in {\bm$T$}, and do so recursively at each child. We call this a {\em pilot grounding} process. 

\vgap 

We now reconstruct the subtrees of $\hat{u}$ and $\hat{v}$. First, it is standard to create all the nodes of $T$ in the subtree of $\hat{u}$, and all the nodes of {\bm$T$} in the subtree of $\hat{v}$ in $O(B^l)$ I/Os. What remains to do is to fill in the pilot sets. We do so in a bottom up manner. Suppose that we are to fill in the pilot set of $v$, knowing that the pilot sets of all the proper descendants of $v$ (in {\bm$T$}) have been computed properly. We populate $\pilot(v)$ using the same algorithm as treating a pilot set underflow at $v$. 

\vgap 

Next, we prove that the whole reconstruction takes $O(B^l)$ I/Os. Let us first analyze the pilot grounding process. We say that a {\em demotion event} occurs when a point moves from the pilot set of a parent node to that of a child. If $N_d$ represents the number of such events, we can bound the total cost of pilot grounding as $O(B^l + N_d / B)$. 

\vgap

To bound $N_d$, first consider a level-1 node $u$ in $T$. A node at level $j \ge 1$ of $\euT(u)$ triggers $O(jB)$ demotion events. Hence, the number of demotion events triggered by all the nodes of $\euT(u)$ is $\sum_{j=1}^{O(\lg B)} O(B/2^j)  \cdot O(jB) = O(B^2)$. As the subtree of $\hat{u}$ has $O(B^{l-1})$ level-1 nodes, they trigger $O(B^2 \cdot B^{l-1}) = O(B^{l+1})$ demotion events in total. 

\vgap

Now consider $u$ as a level-$i$ node of $T$ with $i \ge 2$. Each of the $B$ nodes in $\euT(u)$ can trigger $O(i B \lg B)$ demotion events, resulting in a total event count of $O(i B^2 \lg B)$ for $u$. Since there are $O(B^{l-i})$ nodes at level $i$, the number of demotion events due to the nodes from level 2 to level $l$ is at most
\begin{eqnarray}
    \sum_{i=2}^{l} O(B^{l-i}) \cdot O(iB^2 \lg B) = O(B^l \lg B). \nn
\end{eqnarray}
Therefore, $N_d = O(B^{l+1} + B^l \lg B) = O(B^{l+1})$. It follows that the pilot grounding process requires $O(B^l + N_d/B) = O(B^l)$ I/Os. 

\vgap 

The cost of filling pilot sets can be analyzed in the same fashion, by looking at {\em promotion events}---namely, a point moves from the pilot set of a child to that of the parent. If $N_p$ represents the number of such events, we can bound the cost of pilot set filling as $O(B^l + N_p/B)$. By an argument analogous to the one on $N_d$, one can derive that $N_p = O(B^{l+1})$. 

\extraspacing {\bf Push-Downs and Pull-Ups.} Next, we will prove that each update accounts for only $O(\fr{1}{B} \lg n)$ I/Os incurred by push-downs and pull-ups. At first glance, this is quite intuitive: inserting a point into a pilot set may ``edge out'' an existing point there to the next level of {\bm$T$}, which may then create a cascading effect every level down. Viewed this way, an insertion creates $O(\lg n)$ demotion events, and reversely, a deletion creates $O(\lg n)$ promotion events. As $\Omega(B)$ such events are handled by a push-down or pull-up using $O(1)$ I/Os, the cost amortized on an update should be $O(\fr{1}{B} \lg n)$. What complicates things, however, is the fact that pilot points may bounce up and down across different levels. Below we give an argument to account for this complication.  

\vgap 

We imagine some conceptual {\em tokens} that can be passed by a node to a child in {\bm$T$}, but never the opposite direction. Specifically, the {\em rules} for creating, passing, and deleting tokens are: 
\begin{enumerate} 
    \item When a point $p$ is being inserted into {\bm$T$}, we give $v$ an {\em insertion token} if $p$ is placed in $\pilot(v)$.
    \item When a point $p$ is deleted from {\bm$T$}, we give $v$ a {\em deletion token} if $p$ is removed from $\pilot(v)$. 
    \item In a push-down, when a point $p$ is moved from $\pilot(v)$ to $\pilot(v')$ (where $v'$ is a child of $v$), we take away an insertion token from $v$, and give it to $v'$. We will prove shortly that $v$ always has enough tokens to make this possible.
    
    \item In a pull-up, when a point $p$ is moved from $\pilot(v')$ to $\pilot(v)$ (where $v'$ is a child of $v$), we take away a deletion token from $v$, and give it to $v'$. Again, we will prove shortly that this is always do-able.
    \item When an insertion/deletion token reaches a leaf node, it disappears. 
    \item After a draining pull-up is performed at $v$, all the tokens in the subtree of $v$ disappear. 
    \item When the subtree of a node $v$ is reconstructed, all the tokens in the subtree disappear. 
\end{enumerate}

\begin{lemma} \label{lmm:bigk-inv}
    Our update algorithms enforce two invariants at all times: 
	\begin{itemize} 
		\item Invariant 1: every internal node $v$ in {\bm$T$} has at least $|\pilot(v)| - B$ insertion tokens. 
		\item Invariant 2: every internal node $v$ in {\bm$T$} has at least $B - |\pilot(v)|$ deletion tokens, unless all proper descendants of $v$ in {\bm$T$} have empty pilot sets.
	\end{itemize}
\end{lemma}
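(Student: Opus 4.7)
The plan is to prove both invariants by induction on the sequence of token-affecting events (point placements, point removals, push-downs, pull-ups, and rebuilds) performed on the structure. For the base case, note that a freshly rebuilt subtree carries no tokens and that the refill procedure leaves every internal node $v$ either with $|\pilot(v)| = B$ or with $|\pilot(v)| < B$ and every proper descendant of $v$ emptied. In the first case both invariants demand zero tokens; in the second, Invariant~1 demands zero insertion tokens and Invariant~2 is satisfied through its unless-clause. Since Rule~7 erases all the tokens in a rebuilt subtree, both invariants hold immediately after the rebuild, and the same reasoning covers the completely empty starting structure.

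For the inductive step on Invariant~1, Rule~1 increases $|\pilot(v)|$ and $v$'s insertion-token count by one each, so the required inequality is preserved; Rule~2 only shrinks the requirement. The decisive case is a push-down at $v$, which fires only when $|\pilot(v)| > 2B$. By the inductive hypothesis $v$ holds at least $|\pilot(v)| - B > B$ insertion tokens, so it can afford to ship exactly $|\pilot(v)| - B$ tokens down with the $|\pilot(v)| - B$ points being demoted. After the push-down $|\pilot(v)| = B$ and $v$'s requirement drops to zero; at every recipient child $v'$, the pilot size and the insertion-token count both grow by the same amount, so the gap between the token count and the required amount is preserved and the invariant survives. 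This is exactly what is needed should a cascading push-down fire at $v'$: the freshly transferred tokens fund the next level of push-downs.

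The argument for Invariant~2 is symmetric. Rule~2 credits $v$ with a deletion token in lockstep with the drop in $|\pilot(v)|$; Rule~1 only weakens the requirement. A pull-up at $v$ fires when $|\pilot(v)| < B/2$ and some proper descendant is non-empty, which is exactly the situation in which the inductive hypothesis places at least $B - |\pilot(v)| > B/2$ deletion tokens at $v$, more than enough for the at most $B/2$ tokens to be passed down with the promoted points. At $v$ both the pilot count and the token count change in lockstep, and at each contributing child the pilot count falls and the token count rises by the same amount, supporting any recursive pull-up subsequently triggered at the child. A draining pull-up zeroes out every pilot set in the proper subtree of $v$, which activates the unless-clause of Invariant~2 throughout that subtree and legitimises Rule~6's discarding of the tokens there.

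The main delicacy is the cascading behaviour of push-downs and pull-ups: a single update can touch a whole root-to-leaf path, and an intermediate node must survive both receiving tokens and points from above and, a moment later, shipping them onward. The arithmetic works out only because Rules~3 and~4 move exactly one token per point and in the same direction as the point itself, so the slack between $v$'s token count and its invariant requirement is preserved by every transfer. Once this is noted, every rule either trivially preserves the invariants (Rules~1, 2, 5), or is guarded by a sufficiency bound that the inductive hypothesis itself supplies (Rules~3, 4), or yields a state in which the invariants hold vacuously or through the unless-clause (Rules~6, 7).
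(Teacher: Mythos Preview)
Your proof is correct and follows essentially the same inductive argument as the paper: establish the base case from the post-rebuild state, then verify that each token rule preserves both invariants, with the key observation that Rules~3 and~4 move one token per point so the slack $|\text{tokens}| - (|\pilot(v)|-B)$ (resp.\ $|\text{tokens}| - (B-|\pilot(v)|)$) is preserved at every affected node. One small omission: you should also note explicitly that a pull-up leaves $|\pilot(v)| \le B$ (so Invariant~1 at $v$ is vacuous) and that a push-down only raises $|\pilot(v')|$ at the children (so Invariant~2's requirement there can only shrink); the paper checks these cross-cases, and while they are trivial, your per-invariant organization skips over them.
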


Notice that, by Invariant 1, a node $v$ with $|\pilot(v)| \le B$ is not required to hold any insertion tokens; likewise, by Invariant 2, a node $v$ with $|\pilot(v)| \ge B$ is not required to hold any deletion tokens. Furthermore, the two invariants ensure that the token passing described in Rules 3 and 4 is always do-able.

\begin{proof}[Proof of Lemma~\ref{lmm:bigk-inv}]
	Both invariants hold on $v$ right after the subtree of $v$ has been reconstructed because at this moment either (i) $|\pilot(v)| = B$, or (ii) $|\pilot(v)| < B$ and meanwhile all proper descendants of $v$ in {\bm$T$} have empty pilot sets.
	
	\vgap
	
	Inductively, assuming that the invariants are valid currently, next we will prove that they remain valid after applying our update algorithms. 
    \begin{itemize} 
        \item Putting a newly inserted point $p$ into $\pilot(v)$ gives $v$ a new insertion token, which accounts for the increment of $|\pilot(v)| - B$. Hence, Invariant 1 still holds. Invariant 2 also holds because $B - |\pilot(v)|$ has decreased. 
        
        \item Physically deleting a point $p \in \pilot(v)$ from {\bm$T$} gives $v$ a new deletion token, which accounts for the increment of $B - |\pilot(v)|$. Hence, Invariant 2 still holds. Invariant 1 also holds because $|\pilot(v)| - B$ has decreased.
        
        \item Consider a push-down at node $v$. After the push-down, $|\pilot(v)| = B$; thus, Invariants 1 and 2 trivially hold on $v$. Let $v'$ be a child of $v$. Invariant 1 still holds on $v'$ because $v'$ gains as many insertion tokens as the increase of $|\pilot(v')|-B$. Invariant 2 also continues to hold on $v'$ because the value of $B - |\pilot(v')|$ has decreased. 
        
        \item Consider a pull-up at node $v$. After the pull-up, $|\pilot(v)| \le B$; hence, Invariant 1 trivially holds on $v$. Invariant 2 also holds on $v$ because $v$ loses as many deletion tokens as the decrease of $B - |\pilot(v)|$. Let $v'$ be a child of $v$. Invariant 1 continues to hold on $v'$ because the value of $|\pilot(v')| - B$ has decreased. Invariant 2 also holds on $v'$ because $v'$ gains as many deletion tokens as the increase of $B - |\pilot(v')|$. 
    \end{itemize}
\end{proof}

Recall that a push-down is necessitated at a node $v$ only if $\pilot(v) > 2B$. Therefore, by Invariant 1, after the operation $|\pilot(v)| - B = \Omega(|\pilot(v)|)$ insertion tokens must have descended to the next level of {\bm$T$}. The operation itself takes $O(|\pilot(v)|/B)$ I/Os; after amortization, each of those insertion tokens bears only $O(1/B)$ I/Os of that cost. 

\vgap

Now consider the moment when a pilot set underflow happens at $v$. By Invariant 2, $v$ must be holding at least $B/2$ deletion tokens at this time. Our algorithm performs one or two pull-ups at $v$ using $O(1)$ I/Os. We account for such cost as follows. If neither of the two pull-ups is a draining one, at least $B/2$ deletion tokens must have descended to the next level; we charge the cost on those tokens, each of which bears $O(1/B)$ I/Os. On the other hand, if a draining pull-up occurred, at least $B/2$ deletion tokens must have disappeared; each of them is asked to bear $O(1/B)$ I/Os. 

\vgap 

In summary, each token before its disappearance is charged $O(\fr{1}{B} \lg n)$ I/Os in total. Since an update creates only one token, the amortized update cost only needs to increase by $O(\fr{1}{B} \lg n)$ to cover the cost of push-downs and pull-ups. 

\extraspacing {\bf Remark.} The above analysis has assumed that the height of $T$ remains $\Theta(\lg n)$. This assumption can be removed by the standard technique of global rebuilding. With this, we have completed the proof of Lemma~\ref{lmm:bigk-main}. 

\section{A Structure for {\large \bm$k = O(\polylg n)$}} \label{sec:smallk}

In this section, we will prove: 

\begin{lemma} \label{lmm:smallk-smallk}
    For top-$k$ range reporting with $k = O(\polylg n)$, there is a structure of $O(n/B)$ space that answers a query in $O(\lg_B n + k/B)$ I/Os, and supports an insertion and a deletion in $O(\lg_B n)$ I/Os amortized.
\end{lemma}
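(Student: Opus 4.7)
The plan is to exploit the regime that makes this lemma the interesting case. From Section~\ref{sec:intro-ours} we may assume $B < \lg^6 n$ (otherwise the structure of \cite{st12} already achieves logarithmic updates) and $k < \lg^{7} n$ (otherwise Lemma~\ref{lmm:bigk-main} applies). Both $\lg_B n$ and $k/B$ are then polylogarithmic, and, crucially, a single block holds $\Omega(B \lg n)$ bits so $k$ score/pointer records occupy only $O(k/B + 1)$ blocks that can be moved and reshuffled with word-level parallelism. The design goal is to lay out the tree so that (i) the query descends $O(\lg_B n)$ nodes from whose secondary data the top-$k$ answer can be assembled in $O(\lg_B n + k/B)$ I/Os, and (ii) each update perturbs that secondary data in a way that can be repaired in $O(1)$ I/Os per ancestor through packed-word operations.

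I would build a weight-balanced B-tree $T$ with fanout and leaf capacity $B$ on the x-coordinates of $P$, of height $\Theta(\lg_B n)$, and attach pilot sets to its nodes in the same priority-search-tree spirit as Section~\ref{sec:bigk}: each internal node carries at most $2B$ points, each point resides at exactly one node, and the push-down/pull-up invariants give aggregate space $O(n/B)$ and amortised $O(\lg_B n)$ rebalancing cost via a token argument analogous to Lemma~\ref{lmm:bigk-inv}. On top of this, at every internal node $u$ I would store a compact digest, fitting in $O(1)$ blocks, that encodes the top $K^{*}=\Theta(\lg^{7} n)$ scores of each child's subtree in a bit-packed sorted layout. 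Summed across the $O(n/B)$ internal nodes, the digests still take $O(n/B)$ blocks.

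A query with range $q=[x_1,x_2]$ would descend two paths to $x_1$ and $x_2$, identify the $O(\lg_B n)$ ``canonical'' inside nodes, and run a word-parallel top-$k$ selection over their digests to extract $O(k)$ candidate scores, after which their actual points are read out in $O(1+k/B)$ additional I/Os. Updates would modify the pilot set on a single root-to-leaf path exactly as in Section~\ref{sec:bigk}, then refresh the digest of each of the $O(\lg_B n)$ ancestors in $O(1)$ I/Os apiece using a packed-word update primitive; standard global rebuilding absorbs height changes.

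The main obstacle is realising that $O(1)$-I/O digest refresh. A single pilot change deep in the tree can shift one rank inside the digest of every ancestor, and a naive recomputation costs $\Omega(K^{*}/B)$ I/Os per level, which would violate the update budget. The fix has to come from the ``RAM-reminiscent'' toolbox hinted at in Section~\ref{sec:intro-ours}: encode each digest in a canonical packed form (for example, delta-coded blocks of scores inside a constant number of words) and precompute an $o(n)$-space table indexed by short bit patterns so that any single-score insertion or deletion becomes a constant-length sequence of word operations, as is standard in the word-RAM literature and in particular in \cite{bfgl09}. Checking that these primitives compose with the pilot-set machinery of Section~\ref{sec:bigk}, and that linear space survives the addition of digests, would complete the proof.
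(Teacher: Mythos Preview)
Your proposal has a concrete size miscount that breaks both the space and the update bounds. You take fanout $B$ and attach to each internal node a digest listing the top $K^{*}=\Theta(\lg^{7} n)$ scores of every child. That is $B\cdot K^{*}$ values per node; even at one bit per value this is $\Theta(B\lg^{7}n)$ bits, i.e.\ $\Theta(\lg^{6}n)$ blocks, and with any usable encoding (ranks or deltas need $\Omega(\lg\lg n)$ bits each) it only gets worse. So the digest does \emph{not} fit in $O(1)$ blocks, the total space is not $O(n/B)$, and the ``$O(1)$-I/O refresh via packed-word table lookup'' cannot even read the digest, let alone update it. The hand-wave to \cite{bfgl09}-style tables does not help: those techniques operate on objects of $O(\lg n)$ bits, whereas your digest is polylogarithmically many \emph{blocks}.

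The paper gets around exactly this obstruction by changing both parameters and the summary. It sets the branching parameter to $f=\sqrt{B\lg n}$ (not $B$) and the leaf capacity to $b=flB$, reducing the problem to an auxiliary \emph{$(f,l)$-problem} at each internal node. There it does \emph{not} store the top $l$ scores per child; it stores a \emph{logarithmic sketch}---only $O(\lg l)$ pivots per child---whose compressed form over all $f$ children occupies $f\cdot O(\lg l\cdot\lg(fl))=\sqrt{B}\,\polylg\lg N$ bits and hence truly fits in one block. Queries combine these sketches via approximate union-rank selection (Lemmas~\ref{lmm:smallk-aurs} and~\ref{lmm:smallk-st12}). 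The update argument is then genuinely nontrivial: after an insertion/deletion, almost every pivot's new global/local rank is \emph{deducible} from the old compressed sketch set plus the rank of the changed element, so the whole block is rewritten in memory; the one pivot that cannot be deduced is handled separately, and invalidated pivots $\Sigma_i[j]$ are recomputed either in a batch via a one-block ``prefix'' structure (small $j$, Lemma~\ref{lmm:flprob-rank}) or amortised over the $\Omega(2^{j})$ updates needed to invalidate them (large $j$). None of this uses pilot sets; grafting the Section~\ref{sec:bigk} machinery onto this lemma is unnecessary and does not substitute for the sketch idea.
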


As explained in Section~\ref{sec:intro-ours}, Theorem~\ref{thm:main-final} follows from the combination of Lemmas~\ref{lmm:bigk-main} and \ref{lmm:smallk-smallk}, and a structure of \cite{st12}. To prove Lemma~\ref{lmm:smallk-smallk}, we will first introduce two relevant problems in Sections~\ref{sec:smallk-rank} and \ref{sec:smallk-group}. Our final structure---presented in Section~\ref{sec:smallk-str}---is built upon solutions to those problems.

\subsection{Approximate Union-Rank Selection} \label{sec:smallk-rank} 

Let $L$ be a set of real values. Given a real value $e$, we define its {\em rank} in $L$ as $|\{e' \in L \mid e' \ge e\}|$. Note that the largest element of $L$ has rank 1. 

\vgap

In {\em approximate union-rank selection} (AURS), we are given $m$ disjoint sets $L_1, ..., L_m$ of real values, such that each $L_i$ ($1 \le i \le m$) can be accessed only by the following operators:
\begin{itemize}
    \item \textsc{Max}: Returns the largest element of $L_i$ in $cost_{max}$ I/Os.

	\item \textsc{Rank}: Given a real-valued parameter $\rho \in [1, \fr{1}{c_1}|L_i|]$ where $c_1 \ge 2$ is a constant, this operator returns in $cost_{rank}$ I/Os an element $e \in L_i$ whose rank in $L_i$ falls in $[\rho, c_1 \rho)$.
\end{itemize}
Given an integer $k$ satisfying 
\begin{eqnarray}
    1 \le k \le (1/c_1) \cdot \min\{|L_1|, ..., |L_m|\},
    \label{eqn:smallk-rankcond}
\end{eqnarray}
a query returns an element $e \in \bigcup_{i=1}^m L_i$ whose rank in $\bigcup_{i=1}^m L_i$ falls in $[k, c'k]$, where $c' > 1$ is a constant dependent only on $c_1$.

\vgap 

AURS is reminiscent of a rank selection problem defined by Frederickson and Johnson \cite{fj82}. However, their algorithm assumes a more powerful \textsc{Rank} operator that returns an element in $L_i$ with a {\em precise} rank. In the appendix, we show how to adapt their algorithm to obtain the result below: 

\begin{lemma} \label{lmm:smallk-aurs}
    Each query in the AURS problem can be answered in $O(m (cost_{max} + cost_{rank}))$ I/Os.
\end{lemma}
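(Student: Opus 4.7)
The plan is to adapt the Frederickson-Johnson selection algorithm, which solves an analogous selection problem under a precise-rank oracle, to our approximate-rank setting while paying only a constant factor in both I/O complexity and the approximation ratio.

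First I would dispose of the easy case $k \le m$: call \textsc{Max} on every $L_i$, and select the $k$-th largest of the $m$ returned values in memory. Because the $L_i$'s are disjoint, this element has rank exactly $k$ in $\bigcup_i L_i$, using $m \cdot cost_{max}$ I/Os and no \textsc{Rank} calls at all.

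For the general case $k > m$, I would run a probe-and-prune procedure in the spirit of Frederickson-Johnson. Conceptually, one maintains for each $L_i$ a ``plausible rank range'' $[\ell_i, u_i]$ in which the answer's per-list rank is believed to lie, repeatedly refines these ranges via \textsc{Rank} probes, and prunes list portions that provably cannot contain the answer. In the precise-rank version a potential argument bounds the total probes to $O(m)$. In our setting I would argue that (i) each approximate probe carries the same information as a precise probe up to a multiplicative factor of $c_1$ in rank; (ii) the potential argument is unchanged up to constants, still yielding $O(m)$ probes; and (iii) the returned element's union rank, instead of being exactly $k$, lies in $[k, c'k]$ for a constant $c' = c_1^{O(1)}$ depending only on $c_1$.

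Correctness would be shown by a double-counting argument on per-list rank contributions. For the lower bound (union rank $\ge k$), the $\rho_i$'s used in the probes should be chosen to satisfy $\sum_i \rho_i = \Omega(k)$; each list whose probe value is at least the returned element contributes at least $\rho_i$ to the union rank of that element. For the upper bound (union rank $\le c'k$), each list contributes at most $c_1 \rho_i$, aggregating to $O(c_1 k)$. The complexity then follows directly from the Frederickson-Johnson potential argument, which the approximate probes respect up to constants.

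The main obstacle is the upper-bound half of correctness: under approximate rank, a list whose probe value happens to be much larger than the eventual returned element may harbor many intermediate elements whose contribution to the union rank is not a priori bounded. I expect to handle this by probing each list at two geometrically separated rank targets, so that the returned element is always sandwiched inside a bracket of rank-width $O(c_1)$; the bracket then absorbs the approximation slack and yields the required $[k, c'k]$ guarantee, while the extra probe only doubles the total I/O count.
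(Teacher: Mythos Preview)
Your easy case $k \le m$ is wrong. Taking the $k$-th largest among the maxima $\max(L_1),\ldots,\max(L_m)$ yields an element $v'$ whose union rank is \emph{at least} $k$, but there is no upper bound of the form $c'k$. For instance, with $m=2$, $L_1=\{10,9,\ldots,1\}$ and $L_2=\{0\}$, the second-largest maximum is $0$, whose rank in $L_1\cup L_2$ is $11$; and $|L_1|$ can be made as large as you like. Disjointness of the $L_i$ does nothing for you here. The paper's proof does compute the $k$-th maximum $v'$ in the case $k<m$, but only to discard the $m-k$ lists whose maxima lie below $v'$; it then runs the main (case $k\ge m$) algorithm on the surviving $k$ lists and returns $\max\{v,v'\}$.

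For the main case $k>m$, your description remains at the level of ``run Frederickson--Johnson with a potential argument,'' which is not yet a proof. The paper gives a concrete round-based scheme: in round $j$ it calls \textsc{Rank} with $\rho=c_1^{\,j}k/m$ on each of the $\lceil m/c_1^{\,j-1}\rceil$ active lists, keeps only the $\lceil m/c_1^{\,j}\rceil$ lists with the largest returned markers, and after $\lceil\lg_{c_1}m\rceil$ rounds does a weighted selection over all markers. The geometric schedule is exactly what closes the upper-bound gap you identify: for a list whose smallest surviving pivot above the answer $v$ was taken in round $j$, its \emph{succeeding marker} (taken in round $j+1$, with local rank below $c_1^{\,j+2}k/m$) caps the number of that list's elements above $v$ within a constant factor of its weight contribution. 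Your proposed fix of ``two geometrically separated probes per list'' is not enough: a single list may need $\Theta(\lg_{c_1}m)$ probes at geometrically increasing ranks before the answer is bracketed, and it is the geometric shrinking of the active-set count---not a per-list constant---that keeps the total at $O(m)$ probes.
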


\subsection{Approximate {\large \boldmath$(f, l)$}-Group {\large \boldmath$k$}-Selection} \label{sec:smallk-group}

Given integers $f$ and $l$, we define an {\em $(f, l)$-group} {\bm$G$} as a sequence of $f$ disjoint sets $G_1, ..., G_f$, where each $G_i$ ($1 \le i \le f$) is a set of at most $l$ real values. Let $N \ge fl$ be  an integer such that a word has $\Omega(\lg N)$ bits. 

\vgap

In the {\em approximate $(f, l)$-group $k$-selection problem}---henceforth, the {\em $(f, l)$-problem} for short---the input is an $(f, l)$-group {\bm$G$}, where the values of $f, l$, $N$, and $B$ (block size) satisfy all of the following:
\begin{itemize} 
	\item $l = O(\polylg N)$ 
	\item $f \le \sqrt{B} \lg^\eps N$ where $\eps$ is a constant satisfying $0 < \eps < 1$. 
\end{itemize}
A query is given:
\begin{itemize} 
    \item an interval $q = [\alpha_1, \alpha_2]$ with $1 \le \alpha_1 \le \alpha_2 \le f$,
    \item and a real value $k \in [1, |\bigcup_{i\in q} G_i|]$;
\end{itemize}
it returns a real value $x$ whose rank in $\bigcup_{i\in q} G_i$ falls in $[k, c_2 k)$, where $c_2 \ge 2$ is a constant. It is required that $x$ should be either $-\infty$ or an element in $\bigcup_{i\in q} G_i$. 

\vgap 

The following lemma is a crucial result that stands at the core of our final structure. Its proof is non-trivial and delegated to Section~\ref{sec:flprob}. 

\begin{lemma} \label{lmm:smallk-fl-prob}
	For the $(f, l)$-problem, we can store {\bm$G$} in a structure of $O(fl/B)$ space that answers a query in $O(\lg_B (fl))$ I/Os, and supports an insertion and deletion in $O(\lg_B (fl))$ I/Os amortized. 
\end{lemma}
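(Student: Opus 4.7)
The plan is to reduce the $(f,l)$-problem to AURS (Lemma~\ref{lmm:smallk-aurs}) via a weight-balanced segment tree over the groups, implementing the AURS operators in $O(1)$ I/Os per node by exploiting bit-packing. The governing intuition is that $fl = O(\sqrt{B}\,\polylg N)$ is so small that every node's summary comfortably fits inside a constant number of blocks, so the ``logarithm in the exponent'' that an ordinary segment-tree analysis would pay can be absorbed by CPU-only manipulations on packed words.

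Concretely, I would build a weight-balanced B-tree $\euT$ with branching parameter $\Theta(B^{\eps'})$ for a small constant $\eps'>0$ over $G_1,\dots,G_f$, weighted by $|G_i|$; its depth is $O(\lg_B(fl))$. For each internal node $w$ and each child $w'$ of $w$, I would store a summary of $L(w') = \bigcup_{i\in\mathrm{range}(w')}G_i$ consisting of $\max L(w')$ together with one sampled element at each exponentially spaced rank $\rho\in\{1,c_1,c_1^2,\dots,|L(w')|/c_1\}$, which are precisely the pieces needed to serve the \textsc{Max} and \textsc{Rank} operators of Section~\ref{sec:smallk-rank}. Each sample would be represented by its $O(\lg(fl))$-bit rank inside a globally maintained sorted order of all values; since a word has $\Omega(\lg N)$ bits and $fl \le \sqrt{B}\polylg N$, the summaries of all $O(B^{\eps'})$ children of a node collectively pack into $O(1)$ blocks. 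To answer a query $(q,k)$, I would perform the standard segment-tree decomposition of $q$ in $\euT$ to obtain $O(B^{\eps'}\lg_B(fl))$ canonical subtrees, then invoke Lemma~\ref{lmm:smallk-aurs} on their $L(\cdot)$-sets; with $cost_{\max}$ and $cost_{rank}$ both $O(1)$ from the packed summaries, the total cost is $O(\lg_B(fl))$ once $\eps'$ is chosen small enough. Updates would walk the root-to-leaf path for the affected group and refresh each ancestor's summary in $O(1)$ I/Os, yielding $O(\lg_B(fl))$ I/Os per update; standard weight-balanced B-tree rebalancing contributes another amortized $O(\lg_B(fl))$.

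The hard part will be the RAM-bit-trick bookkeeping of the summaries under updates. Inserting a single value into some $G_i$ can in principle shift many of the globally maintained rank identifiers stored in summaries further up, so na\"ive propagation would be too expensive. I plan to address this by storing only \emph{local} rank indices at each summary, together with a compact per-level translation table to the global order, and by applying global rebuilding every $\Theta(fl)$ updates to reset encoding drift; the verification that the AURS \textsc{Rank} operator still executes in $O(1)$ I/Os under this encoding, and that all the $B^{\eps'}$-factor overheads are indeed absorbed into the target $O(\lg_B(fl))$ bound, is the technical heart of the argument. The space bound $O(fl/B)$ then follows because each node's summary is a constant number of packed words and $\euT$ has $O(f/B^{\eps'})$ internal nodes plus $O(fl/B)$ leaf storage.
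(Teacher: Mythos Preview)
Your high-level instinct—pack logarithmic-rank summaries into blocks and run selection in memory—is right, but the architecture diverges from the paper in a way that adds work rather than saving it, and the part you yourself flag as ``hard'' is exactly where the proof is missing.

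The paper builds \emph{no} tree over the groups. Because $f \le \sqrt{B}\lg^\eps N$ and $l = O(\polylg N)$, the logarithmic sketches $\Sigma_1,\dots,\Sigma_f$ of all $f$ groups together—each pivot encoded by its global rank in $\G$ and its local rank in $G_i$—occupy only $f\lg l \cdot O(\lg(fl)) = \sqrt{B}\cdot O(\lg^\eps N\,(\lg\lg N)^2)$ bits, which fit in one block. A query loads that block, runs the sketch-selection algorithm of Lemma~\ref{lmm:smallk-st12} entirely in memory, and converts the resulting global rank to an actual element via a B-tree on $\G$. No segment tree, no canonical decomposition, no AURS. Your reduction to AURS is also technically broken as stated: Lemma~\ref{lmm:smallk-aurs} requires $k \le (1/c_1)\min_i |L_i|$, but your canonical pieces $L(w')$ can be arbitrarily small relative to $k$, so the precondition~\eqref{eqn:smallk-rankcond} need not hold.

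The real content of the lemma is the update mechanism, and here your proposal is only a placeholder. The paper's argument is concrete. Inserting $e_{new}$ into $G_i$ with global rank $r_{new}$ shifts, in memory, the global rank of every stored pivot with rank $\ge r_{new}$ by one, and the local ranks in $\Sigma_i$ likewise; at most one genuinely new pivot arises (when $|G_i|$ hits a power of two). Pivots whose local rank drifts out of $[2^{j-1},2^j)$ are recomputed, with the cost split by regime: for $2^j \ge \sqrt{B}\lg_B(fl)$, the $O(\lg_B(fl))$ recomputation is amortized over the $\Omega(2^j)$ updates needed to invalidate $\Sigma_i[j]$, contributing $O(1/\sqrt{B})$ per update per $j$; for $2^j < \sqrt{B}\lg_B(fl)$, all such pivots are refreshed in $O(1)$ I/Os from a separately maintained ``prefix'' block (Lemma~\ref{lmm:flprob-rank}) that packs the top $\sqrt{B}\lg_B(fl)$ elements of every $G_i$. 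Your ``local rank indices plus per-level translation table plus periodic global rebuild'' does not obviously yield $O(\lg_B(fl))$ amortized: a single insertion can invalidate samples at \emph{every} ancestor in your tree, and you have not shown how to refresh even one sample without paying $\Omega(\lg_B(fl))$ at each level.
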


\subsection{Proof of Lemma~\ref{lmm:smallk-smallk}} \label{sec:smallk-str}

We are now ready to elaborate on the structure claimed in Lemma~\ref{lmm:smallk-smallk}. It suffices to focus on the {\em approximate range $k$-selection problem}: 

\begin{center}
\begin{minipage}{0.9\linewidth}
    The input is the same set $S$ of points as in top-$k$ range reporting. Given an interval $q = [x_1, x_2]$ and an integer $k$ satisfying $1 \le k \le |S \intr q|$, a query returns a point $e \in S \intr q$ such that between $k$ and $O(k)$ points in $S \intr q$ have scores at least $\score(e)$. 
\end{minipage}
\end{center}
Suppose that there is a structure solving the above problem with query time $t_q$ and amortized update time $t_u$. Then, we immediately obtain a structure of asymptotically the same space for top-$k$ range reporting with query time $O(t_q + \lg_B n + k/B)$ and amortized update time $O(t_u + \lg_B n)$ (see \cite{st12}). A structure with $t_q = \lg_B n$ and $t_u = \lg^2_B n$ was given in \cite{st12}. 

\vgap 

Fix an integer $l = O(\polylg n)$. Next, assuming $k \le l$, we describe a linear-size structure with $t_q = t_u = O(\lg_B n)$, which therefore yields a structure of Lemma~\ref{lmm:smallk-smallk}. 

\extraspacing {\bf Structure.} We build a WBB-tree $T$ on $S$ with branching parameter $f = \sqrt{B\lg n}$, and leaf capacity $b = flB$. Each node $u$ naturally corresponds to an x-range in $\real$. If $u$ is an internal node with child nodes $u_1, ..., u_f$, define a {\em multi-slab} to be the union of the x-ranges of $u_i, u_{i+1}, ..., u_j$ for some meaningful $i, j$. 

\vgap 

Given an (internal/leaf) node $u$, let $S_u$ be the set of elements stored in the subtree of $u$. Define $G_u$ as the set of $c_2 l$ highest scores of the elements in $S_u$, where $c_2$ is the constant mentioned in the definition of the $(f, l)$-problem in Section~\ref{sec:smallk-group}. 

\vgap

For each leaf node $u$, maintain a structure of \cite{st12} to support approximate range $k$-selection on $S_u$. Consider now $u$ as an internal node with child nodes $u_1, ..., u_f$. We
\begin{itemize} 
    \item maintain an {\em $(f, c_2 l)$-structure} of Lemma~\ref{lmm:smallk-fl-prob} on the $(f, c_2 l)$-group {\boldmath$G$}$_u = (G_{u_1}, ..., G_{u_f})$, with $N$ fixed to some integer in $[n, 4n]$ (this will be guaranteed by our update algorithms). 
    
    \item store $G_{u_1} \cup ... \cup G_{u_f}$ in a (slightly augmented) B-tree so that, for any $1 \le \alpha_1 \le \alpha_2 \le f$, the maximum score in $\bigcup_{i \in [\alpha_1, \alpha_2]} G_{u_i}$ can be found in $O(\lg_B (fl))$ I/Os.
\end{itemize}

There are $O(n/(fb))$ internal nodes, each of which occupies $O(fl/B)$ blocks. Hence, all the internal nodes use altogether $O(\fr{n}{fb} \cdot \fr{fl}{B}) = o(n/B)$ space. The overall space cost is therefore $O(n/B)$. 

\extraspacing {\bf Query.} Given a query with parameters $q = [x_1, x_2]$ and $k$, search $T$ in a standard way to identify a minimum set $C$ of $O(\lg_f n)$ disjoint {\em canonical ranges} whose union covers $q$, such that each canonical range is either the x-range of a leaf node or a multi-slab. 

\vgap 

Define $S_m = m \intr S$ for each multi-slab $m \in C$. Perform AURS with parameter $k$ on $\{S_m \mid m \in C\}$. At each internal node $u$ on which a multi-slab $m \in C$ is defined, the $(f, c_2 l)$-structure of $u$ and the B-tree on {\boldmath$G$}$_u$ allow us to implement the \textsc{Rank} and \textsc{Max} operators on $S_m$ in $O(\lg_B (fl))$ I/Os, respectively. Therefore, by Lemma~\ref{lmm:smallk-aurs}\footnote{The constant $c_1$ in the \textsc{Rank} operator's definition (see Section~\ref{sec:smallk-rank}) equals $c_2$ here, as is guaranteed by the $(f,c_2 l)$-structures. Given that we focus on $k \le l$ while each $G_u$ has size $c_2 l$, we know that the condition stated in \eqref{eqn:smallk-rankcond} always holds.}, the AURS finishes in $O(\lg_f n \cdot \lg_B (fl)) = O(\lg_B n)$ I/Os. Denote by $e$ the element returned\footnote{The AURS returns only the score of $e$, but it is easy to fetch $e$ by the score in $O(\lg_B n)$ I/Os.}.

\vgap 

For each leaf node $z$ whose x-range is in $C$, perform approximate range $k$-selection on $S_z$ using $q$ in $O(\lg_B b) = O(\lg_B n)$ I/Os. There are at most two such leaf nodes; let $e_1, e_2$ be the 
results of approximate range $k$-selection on them, respectively. We return $\max\{e, e_1, e_2\}$ as the final answer.

\extraspacing {\bf Update.} The update algorithm (which is relatively standard) can be found in the appendix.

\section{Solving the \bm{\large $(f,l)$}-PROBLEM} \label{sec:flprob}

We devote this section to proving Lemma~\ref{lmm:smallk-fl-prob}. Henceforth, by ``query'', we refer to a query in the $(f, l)$-problem. When no ambiguity can arise, we use {\bm$G$} to denote also the union of $G_1$, ... $G_f$. 

\subsection{A Static Structure} \label{sec:flprob-sta} 

We will need a tool called the {\em logarithmic sketch}---henceforth, {\em sketch}---developed in \cite{st12}. Let $L$ be a set of $l$ real values. Its sketch $\Sigma$ is an array of size $\lf \lg l \rf + 1$, where the $j$-th ($1 \le j \le \lf \lg l \rf + 1$) entry $\Sigma[j]$---called a {\em pivot}---is an element in $L$ whose rank in $L$ falls in $[2^{j-1}, 2^j)$; any such element can be used as $\Sigma[j]$. 
 
\begin{lemma} [\cite{st12}] \label{lmm:smallk-st12}
     Let $L_1, ..., L_m$ be $m$ disjoint sets of real values. Given their sketches and a real value $k$ satisfying $1 \le k \le |\bigcup_{i=1}^m L_i|$, we can find in $O(m)$ I/Os a real value $x$ whose rank in $\bigcup_{i=1}^m L_i$ is between $k$ and $c_3 k$ (where $c_3 \ge 2$ is a constant). Furthermore, $x$ is either $-\infty$ or an element in $\bigcup_{i=1}^m L_i$.
\end{lemma}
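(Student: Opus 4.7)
The plan is to exploit the sketches to obtain, at any real threshold $t$, a constant-factor estimate of $\rank_{\bigcup_i L_i}(t)$ using only information already loaded into memory, and then pick $t$ among the $O(m\lg l)$ pivots the sketches expose. For each $i$, let $j_i(t)$ be the largest index with $\Sigma_i[j]\ge t$ (set $j_i(t)=0$ if no such $j$ exists); because $\Sigma_i[j_i(t)]$ has rank in $L_i$ inside $[2^{j_i(t)-1},2^{j_i(t)})$ while $\Sigma_i[j_i(t)+1]$, if present, has rank at least $2^{j_i(t)}$, the per-set quantity $\rank_{L_i}(t):=|\{e\in L_i : e\ge t\}|$ lies in $[2^{j_i(t)-1},2^{j_i(t)+1})$. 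Setting $\hat r_i(t) := 2^{j_i(t)}$ when $j_i(t)\ge 1$ and $0$ otherwise, and $\hat R(t) := \sum_{i=1}^m \hat r_i(t)$, the true global rank $R(t) := \rank_{\bigcup_i L_i}(t)$ obeys $R(t) \in [\hat R(t)/2,\, 2\hat R(t))$.

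The algorithm first reads all $m$ sketches. In the compact encoding of \cite{st12} each sketch fits in $O(1)$ blocks, so loading costs $O(m)$ I/Os. With everything in memory, CPU work (free in this model) is used to enumerate every pivot across the sketches, evaluate $\hat R(\cdot)$ at each of them by probing every sketch, and return the largest pivot $t^*$ satisfying $\hat R(t^*)\ge 2k$; if no pivot meets the threshold, return $-\infty$.

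Correctness rests on the observation that, as $t$ decreases, $\hat R$ is a non-decreasing step function whose jumps occur only at pivots and whose value at most doubles across any single pivot transition once it exceeds $2$: at such a transition a lone $\hat r_i$ either doubles (from $2^j$ to $2^{j+1}$, an increase bounded by $\hat R_{\text{before}}$, since $2^j$ was already a term of $\hat R_{\text{before}}$) or moves from $0$ to $2$. Because $k\ge 1$, this gives $\hat R(t^*)\in[2k,4k)$ at the chosen pivot, hence $R(t^*)\in[k,8k)$, establishing the lemma with $c_3=8$. If no pivot satisfies $\hat R\ge 2k$, then at the smallest pivot every $j_i$ has reached its maximum value, so $\hat R \ge \sum_i |L_i|=|\bigcup_i L_i|$; the failure therefore forces $|\bigcup_i L_i|<2k$, and combined with the hypothesis $k\le |\bigcup_i L_i|$ we get $R(-\infty)=|\bigcup_i L_i|\in [k,2k)$, so $-\infty$ is a valid output. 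The main subtlety — and the step I would prove most carefully — is the ``at most doubles'' bound on $\hat R$ across one pivot, which is what upgrades the per-set constant-factor approximations into a global constant-factor approximation and pins down $c_3$; a secondary technical point is that the stated $O(m)$ bound (rather than $O(m\lceil \lg l/B\rceil)$) tacitly relies on the compact $O(1)$-block storage of each sketch.
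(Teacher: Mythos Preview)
The paper does not prove this lemma; it is imported verbatim from \cite{st12} and used as a black box in Section~\ref{sec:flprob-sta}. So there is no ``paper's own proof'' to compare against, and your proposal should be read as a reconstruction of the result.

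Your argument is correct. The two key steps---(i) that $\hat r_i(t)=2^{j_i(t)}$ sandwiches $\rank_{L_i}(t)$ within a factor of $2$ on each side, and (ii) that $\hat R$ can at most double when crossing a single pivot once $\hat R\ge 2$---are both sound, and together give $R(t^\ast)\in[k,8k)$, so $c_3=8$ works. The $-\infty$ branch is also handled correctly: at the globally smallest pivot every $j_i$ is maximal, hence $\hat r_i=2^{\lfloor\lg|L_i|\rfloor+1}>|L_i|$ and $\hat R>|\bigcup_i L_i|$, so failure of the threshold forces $|\bigcup_i L_i|<2k$. Two minor remarks: your doubling argument implicitly assumes distinct pivot values across sketches (so that only one $\hat r_i$ changes per step), which holds here because scores are distinct; and, as you already note, the $O(m)$ bound depends on each sketch occupying $O(1)$ blocks, which is how the paper (and \cite{st12}) stores them.
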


Create a sketch $\Sigma_i$ for each $G_i$ ($1 \le i \le f$). Call the set $\{\Sigma_1, ..., \Sigma_f\}$ a {\em sketch set}. We store a compressed form of the sketch set as follows. Describe each pivot $e \in \Sigma_i$ by its {\em global rank} in {\bm$G$} using $\lg(fl)$ bits, and by its {\em local rank} in $G_i$ using $\lg l$ bits. Hence, each $\Sigma_i$ requires $\lg l \cdot 2\lg(fl)$ bits. A compressed sketch set occupies $f \lg l \cdot 2\lg(fl) = \sqrt{B} \cdot \lg^{\eps} N \cdot O((\lg\lg N)^2)$ bits, and thus fits in a block (which has $B \cdot \Omega(\lg N)$ bits).  

\vgap

Given a query, we first spend an I/O reading the compressed sketched set, and then run the algorithm of Lemma~\ref{lmm:smallk-st12} on it in memory. Suppose that this algorithm outputs $x$. If $x = -\infty$, we simply return $-\infty$ as our final answer. Otherwise, $x$ is equal to the global rank of an element in {\bm$G$}. To convert the global rank to an actual element, we index all the elements of {\bm$G$} with a B-tree, which supports such a conversion in $O(\lg_B (fl))$ I/Os. The overall space is $O(fl/B)$ (due to the B-tree); and the query cost is $O(\lg_B (fl))$. Notice that the constant $c_2$ in Section~\ref{sec:smallk-group} equals the constant $c_3$ stated in Lemma~\ref{lmm:smallk-st12}. 

\subsection{Supporting Insertions} \label{sec:flprob-ins} 

To facilitate updates, we store the elements of each $G_i$ ($1 \le i \le f$) in a B-tree that allows us to obtain the element of any specific local rank in $O(\lg_B l)$ I/Os. In addition, we also maintain a structure of the following lemma, whose proof is deferred to Section~\ref{sec:flprob-rank}: 

\begin{lemma} \label{lmm:flprob-rank}
	We can store an $(f, l)$-group $\G = (G_1, ..., G_f)$ in a structure of $O(fl/B)$ space such that, in one I/O, we can read into memory a single block, from which we can obtain for free the global rank of the element with local rank $r$ in $G_i$, for every $r \in [1, \sqrt{B} \lg_B (fl)]$ and every $i \in [1, f]$. The structure supports an insertion and a deletion in $O(\lg_B (fl))$ I/Os. 
\end{lemma}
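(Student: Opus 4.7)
The plan is to maintain three complementary parts. First, a global B-tree $T$ on $\bigcup_i G_i$, augmented with subtree counts so that the global rank of any element can be returned (and maintained) in $O(\lg_B(fl))$ I/Os. Second, for each $G_i$, a local B-tree $T_i$ supporting both local-rank queries and rank-to-element selections in $O(\lg_B l)$ I/Os. Third---the heart of the lemma---a single summary block $\Sigma$ that caches, for every pair $(i,r)$ with $1 \le i \le f$ and $1 \le r \le \sqrt{B}\lg_B(fl)$, the global rank of the element at local rank $r$ in $G_i$. The two tiers of B-trees together consume $O(fl/B)$ blocks and support updates in $O(\lg_B(fl))$ I/Os by standard means; the novelty is in $\Sigma$.

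The first sanity check is that $\Sigma$ genuinely fits in a single block. Each entry uses $\lg(fl)$ bits, and in the regime where this lemma is used (the ``small-$B$'' case flagged in Section~\ref{sec:intro-ours}, where $B = O(\polylg N)$), $\lg B$, $\lg f$, and $\lg l$ are all $O(\lg\lg N)$, so
\[
    f \cdot \sqrt{B}\lg_B(fl) \cdot \lg(fl)
    \;=\; O\!\left( B \lg^\eps N \cdot (\lg\lg N)^2 \right)
    \;=\; o(B \lg N),
\]
well below the $\Omega(B\lg N)$-bit capacity of a block. A query is then trivial: read $\Sigma$ with one I/O and extract the requested field in memory (free, since CPU is free in the external-memory model).

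The algorithmic content is in the updates. To insert a value $e$ into $G_j$, I would (i)~insert $e$ into $T$ and harvest its new global rank $r_e$ in $\G\cup\{e\}$; (ii)~insert $e$ into $T_j$ and harvest its new local rank $r^*$; (iii)~read $\Sigma$ in one I/O; (iv)~in memory, add $1$ to every entry whose stored value is $\ge r_e$ (since each such element has just been pushed one slot down in the global order) and, if $r^* \le \sqrt{B}\lg_B(fl)$, shift the $G_j$-entries at positions $r^*, r^*+1, \ldots$ one slot to the right, discard the slot that falls off the tail, and plant $r_e$ at position $r^*$; finally (v)~write $\Sigma$ back in one I/O. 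Deletion is symmetric; the only wrinkle is that the shift-left in $\Sigma$ may empty the tail slot at position $\sqrt{B}\lg_B(fl)$, which is refilled by a selection query in $T_j$ followed by a rank query in $T$, still within the $O(\lg_B(fl))$ budget.

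The main obstacle is realizing step~(iv) on the bit-packed $\Sigma$. I would lay out the $\sqrt{B}\lg_B(fl)$ entries of each $G_i$ as a contiguous bit-field of uniform width $\lg(fl)$, and concatenate the $f$ such fields inside $\Sigma$. Under this layout, the ``increment every entry $\ge r_e$'' step becomes a single word-parallel compare-and-add sweep over the packed fields (using standard masking tricks to compare all fields to a broadcast copy of $r_e$ in parallel), and the shift-and-insert on $G_j$ is a masked shift inside one contiguous region. Both are routine RAM-model primitives that cost no I/O in this model, so the overall update cost is $O(\lg_B(fl))$ I/Os, the total space stays at $O(fl/B)$, and the lemma follows.
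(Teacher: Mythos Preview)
Your proposal is correct and matches the paper's proof almost exactly: both maintain a global B-tree on $\G$, per-set B-trees on each $G_i$, and one packed block holding the global ranks of the top $\sqrt{B}\lg_B(fl)$ elements of every $G_i$, with insertions and deletions updating that block in memory via precisely the rank-shift logic you describe (including refilling the tail slot on deletion by a local selection followed by a global rank lookup). The only differences are cosmetic: the paper also stores each element's local rank alongside its global rank in the block (redundant under your positional layout), and it omits your word-parallel discussion since CPU work is free in the EM model; your explicit appeal to the $B=O(\polylg N)$ regime is likewise what the paper's own $(\lg\lg N)^2$ bit count implicitly relies on.
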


Suppose that an element $e_{new}$ is to be inserted in $G_i$ for some $i \in [1, f]$. Let $r_{new}$ be the rank of $e_{new}$ in {\bm$G$}. We observe that, except perhaps a single pivot, the new compressed sketch set (after the update) can be deduced from: the current compressed sketch set, $r_{new}$ and $i$. To understand this, consider first a compressed sketch $\Sigma_{i'}$ where $i' \neq i$. Each pivot whose global rank is at least $r_{new}$ now has its global rank increased by 1 (its local rank is unaffected). Regarding the compressed $\Sigma_i$, the same is true, but additionally every such pivot should also have its local rank increased by 1. Furthermore, a new pivot is needed in $\Sigma_i$ if $|G_i|$ reaches a power of 2 after the insertion---in such a case we say that $\Sigma_i$ {\em expands}; the new pivot is the only one in the compressed sketch set that cannot be deduced (because its global rank is unknown).

\vgap

Motivated by this observation, to insert $e_{new}$ in $G_i$, we first obtain $r_{new}$ from the B-tree of {\bm$G$} in $O(\lg_B (fl))$ I/Os, and then update the new compressed sketch set as described earlier in 1 I/O. Next, $e_{new}$ is inserted in the B-trees of {\bm$G$} and $G_i$ using $O(\lg_B(fl))$ I/Os. If now $|G_i|$ is a power of 2, we retrieve the global rank of the smallest element in $G_i$ in $O(\lg_B (fl))$ I/Os, and add the element to $\Sigma_i$ in memory.

\vgap

Recall that, the $j$-th ($1 \le j \le \lf \lg l \rf + 1$) pivot of $\Sigma_i$ should have its local rank confined to $[2^{j-1}, 2^j)$. If this is not true, we say that it is {\em invalidated}. The insertion may have invalidated one or more pivots, (all of which can be found with no I/O because $\Sigma_i$ in memory). Upon the invalidation of $\Sigma_i[j]$, we replace it as the element $e \in G_i$ with local rank $\lf \fr{3}{2} \cdot 2^{j-1} \rf$ so that $\Omega(2^j)$ updates in $G_i$ are needed to invalidate $\Sigma_i[j]$ again. For the replacement to proceed, it remains to obtain the global rank of $e$. We do so by distinguishing two cases: 

\begin{itemize} 
    \item {\em Case $2^j \ge \sqrt{B} \lg_B (fl)$.} We simply fetch $e$ from the B-tree on $G_i$, and obtain its global rank from the B-tree on {\bm$G$}. We can now update $\Sigma_i[j]$ in memory.  

	\vgap

	In total, the invalidated pivot is fixed with $O(\lg_B (fl)) = O(2^j/\sqrt{B})$ I/Os. Since $\Omega(2^j)$ updates must have occurred in $G_i$ to trigger the invalidity of $\Sigma_i[j]$, each of those updates accounts for $O(1/\sqrt{B})$ I/Os of the pivot recomputation. As an update can be charged at most $O(\lg l)$ times this way (i.e., once for every $j \ge \sqrt{B} \lg_B (fl)$), its amortized cost is increased by only $O(\fr{1}{\sqrt{B}} \lg l) = O(\lg_B l)$.
	
	\item {\em Case $2^j < \sqrt{B} \lg_B (fl)$.} There are $O(\lg (\sqrt{B}\lg_B (fl))$ such invalidated pivots in $\Sigma_i$. We can recompute {\em all} of them together in $O(1)$ I/Os using Lemma~\ref{lmm:flprob-rank}. 
\end{itemize}

Overall, an insertion requires $O(\lg_B (fl))$ I/Os amortized. 

\subsection{Supporting Deletions} \label{sec:flprob-del} 

Suppose that an element $e_{old}$ is to be deleted from $G_i$ for some $i \in [1, f]$. Let $r_{old}$ be the rank of $e_{old}$ in {\bm$G$}. Except possibly for only one pivot, the new compressed sketch set can be deduced based only on the current compressed sketch set, $r$, and $i$. To see this, consider first $\Sigma_{i'}$ where $i' \neq i$. Each pivot whose global rank is larger than $r_{old}$ now needs to have its global rank decreased by 1. Regarding $\Sigma_i$, the same is true, and every such pivot should also have its local rank decreased by 1. Furthermore, the last pivot of $\Sigma_i$ should be discarded if $|G_i|$ was a power of 2 before the deletion: in such a case, we say that $\Sigma_i$ {\em shrinks}. Finally, if $e_{old}$ happens to be a pivot of $\Sigma_i$, a new pivot needs to be computed to replace it---this is the only pivot that cannot be deduced; we call it a {\em dangling} pivot.

\vgap

The concrete steps of deleting $e_{old}$ are as follows. After fetching its global rank $r_{old}$ in $O(\lg_B (fl))$ I/Os, we update the compressed sketch set in memory according to the above discussion. If $\Sigma_i$ shrinks, we delete the last pivot $\Sigma_i$ in memory. If $e_{old}$ was a pivot (say, the $j$-th one for some $j$), we retrieve the element $e$ with local rank $\lf \fr{3}{2} \cdot 2^{j-1} \rf$ in $G_i$, and obtain its global rank using $O(\lg_B (fl))$ I/Os. We then replace the dangling pivot $\Sigma_i[j]$ with $e$ in memory. 

\vgap 

Finally, recompute the invalidated pivots (if any) in the same way as in an insertion. As analyzed in Section~\ref{sec:flprob-ins}, such recomputation increases the amortized update cost by only $O(\lg_B l)$.

\subsection{Proof of Lemma~\ref{lmm:flprob-rank}} \label{sec:flprob-rank}

Let us define the list of the $\sqrt{B} \lg_B (fl)$ largest elements of $G_i$ ($1 \le i \le f$) as the {\em prefix} of $G_i$, and denote it as $P_i$. Let {\boldmath$P$} be the union of $P_1, .., P_f$; we refer to {\boldmath$P$} as a {\em prefix set}. {\boldmath$P$} contains at most $f \sqrt{B} \lg_B (fl)$ points. 

\vgap 

We compress {\boldmath$P$} by describing each element $e$ (say, $e \in P_i$ for some $i$) in {\boldmath$P$} using its global rank in {\bm$G$} and its local rank in $G_i$, for which purpose $O(\lg (fl))$ bits suffice. Hence, {\boldmath$P$} can be described by $f \cdot \sqrt{B} \lg_B (fl) \cdot O(\lg(fl)) = \sqrt{B} \lg^{\eps} N \cdot \sqrt{B} \cdot O((\lg\lg N)^2) = B \cdot \lg^{\eps} N \cdot O((\lg\lg N)^2)$ bits, which fit in a block. After loading this block into memory, we can obtain the global rank of the $r$-th largest element of $G_i$ for free, regardless of $r \in [1, \sqrt{B}\lg_B (fl)]$ and $i$. 

\vgap 

Besides the aforementioned block, we also maintain a B-tree on each $G_i$ ($1 \le i \le f$) and a B-tree on {\bm$G$}. The space consumed is $O(fl/B)$.

\extraspacing {\bf Insertion.} Suppose that we need to insert an element $e_{new}$ into $G_i$. First, we update the B-trees of $G_i$ and {\bm$G$} in $O(\lg_B (fl))$ I/Os. With the same cost, we can also decide whether $e_{new}$ should enter $P_i$. If not, the insertion is complete.  

\vgap 

Otherwise, we find the global rank $r_{new}$ of $e_{new}$ and its local rank $r_{new}'$ in $G_i$ with $O(\lg_B (fl))$ I/Os. Load the compressed prefix set {\bm$P$} into memory with 1 I/O. Then, the new compressed prefix set can be determined for free based on {\boldmath$P$}, $i$, $r_{new}$, and $r_{new}'$. To see this, first consider a compressed prefix $P_{i'}$ with $i' \neq i$: if an element has global rank at least $r_{new}$, it should have its global rank increased by 1. Regarding the compressed prefix $P_i$, the same is true; furthermore, all such elements in $P_i$ should also have their local ranks increased by 1. Finally, we add $e_{new}$ into $P_i$; if $P_i$ has a size over $\sqrt{B}\lg_B (fl)$, we discard its smallest element.   

\extraspacing {\bf Deletion.} Suppose that we need to delete an element $e_{old}$ from $G_i$. Using the B-tree on {\bm$G$}, we find its global rank $r_{old}$ in $O(\lg_B (fl))$ I/Os. Then, $e_{old}$ is removed from the B-trees of $G_i$ and {\bm$G$} in $O(\lg_B (fl))$ I/Os. 

\vgap 

If $e_{old} \notin P_i$, the deletion is done. Otherwise, we load the compressed prefix set in 1 I/O, and then update it, except for a single element, in memory. Specifically, in a compressed prefix $P_{i'}$ with $i' \neq i$, if an element has global rank at least $r_{old}$, it should have its global rank decreased by 1. Regarding the compressed prefix $P_i$, the same is true; furthermore, all such elements in $P_i$ should also have their local ranks decreased by 1. 

\vgap

The last element of $P_i$ is the only one that cannot be inferred directly at this point. But it can be filled in simply by retrieving the element with local rank $\sqrt{B} \lg_B (fl)$ in $G_i$, and then its global rank in {\bm$G$}, all in $O(\lg_B (fl))$ I/Os.

%

\bibliographystyle{abbrv}
\bibliography{ref}


\section*{Appendix}
\section*{Proof of Lemma~\ref{lmm:smallk-aurs}}

In this proof, set $c = c_1$ and $L = \cup_{i=1}^m L_i$. Given an element $e \in L_i$ ($1 \le i \le m$), we refer to its rank in $L_i$ as its {\em local rank}, and its rank in $L$ as its {\em global rank}. 

\extraspacing {\bf Case \bm$k \ge m$.} Our algorithm executes in $\lc \lg_c m \rc$ rounds. In the $j$-th round ($1 \le j \le \lc \lg_c m \rc$), $\lc m/c^{j-1} \rc$ sets among $L_1, ..., L_m$ are {\em active}, while the others are {\em inactive}. At the beginning, $L_1, ..., L_m$ are all active. 

\vgap 

In round $j \in [1, \lc \lg_c m \rc]$, we execute \textsc{Rank} on each active set $L_i$ with parameter $\rho = c^j k/m$. Remember that the operator can return any element whose local rank falls in $[c^j k/m, c^{j+1} k/m)$.\footnote{Such an element definitely exists because $c^j k/m \le ck \le |L_i|$.} Let $P'$ be the set of elements fetched. We call each element in $P'$ a {\em marker}, and assign it a {\em weight} equal to 
\begin{itemize} 
    \item $\lc c k/m \rc$ if $j = 1$;
    \item $\lc c^j k/m \rc - \lc c^{j-1} k/m \rc$ if $j > 1$. 
\end{itemize}
The $\lc m/c^j \rc$ largest markers in $P'$ are taken as {\em pivots}, among which the smallest is the {\em cutoff pivot} of this round. An active set remains active in the next round if its marker 
is a pivot, whereas the other active sets become inactive.

\vgap

Denote by $P_j$ the set of pivots taken in the $j$-th round ($1 \le j \le \lc \lg_c m \rc]$), and by $P$ the union of $P_1, P_2, ..., P_{\lc \lg_c m \rc}$. It is clear that $|P| = \sum_{j=1}^{\lc \lg_c m \rc} \lc m/c^j \rc = O(m)$. 

\vgap

Consider a pivot $p \in P_j$, and suppose that it comes from $L_i$ for some $i \in [1, m]$. Define the {\em local prefix weight} of $p$ as the sum of the weights of the first $j$ pivots fetched from $L_i$. By how we define weights, it is easy to verify that the local prefix weight of $p$ is $\lc c^jk/m \rc$. For each pivot $p \in P$, define its {\em prefix weight} as the total weight of all the pivots that are larger than or equal to $p$.

\begin{observation} \label{obs:app-aurs-cutoff}
	Every cutoff pivot has a prefix weight at least $k$.
\end{observation}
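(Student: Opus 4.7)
The plan is to lower-bound the prefix weight of the cutoff pivot $p^*$ of round $j$ by counting only the contributions from sets that survive round $j$; any additional pivots $\ge p^*$ from other sources only strengthen the bound. Call a set $L_i$ a \emph{survivor} of round $j$ if its round-$j$ marker was chosen as a pivot. There are exactly $\lceil m/c^j \rceil$ survivors, and by definition each of them contributed exactly one pivot in every round from $1$ through $j$.

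The structural fact to establish first is a monotonicity within each fixed $L_i$: in round $\ell$ the marker pulled from $L_i$ has local rank in $[c^\ell k/m,\, c^{\ell+1} k/m)$, so the local ranks of successive markers grow with $\ell$, and since a larger local rank corresponds to a smaller element value, the sequence of pivots produced by $L_i$ across rounds $1, 2, \ldots, j$ is non-increasing. Applied to a survivor $L_i$, this means each of its round-$1$ through round-$j$ pivots is at least as large as its round-$j$ pivot, which is in turn $\ge p^*$ since $p^*$ is the smallest pivot produced in round $j$. Hence all $j$ of those pivots are counted in the prefix weight of $p^*$. The already-noted telescoping identity says that the total weight of the first $j$ pivots contributed by any survivor equals the local prefix weight $\lceil c^j k/m \rceil$, so summing over all $\lceil m/c^j \rceil$ survivors gives a prefix weight at least $\lceil m/c^j \rceil \cdot \lceil c^j k/m \rceil \ge (m/c^j)(c^j k/m) = k$, which is the required bound. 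Since $j$ was arbitrary, the claim holds for every cutoff pivot.

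The argument is essentially bookkeeping and I do not foresee a serious obstacle. The single point that needs care is the within-$L_i$ monotonicity: one must use the ordering convention that rank $1$ is the largest element (so larger local rank means smaller value) to justify that a survivor's \emph{entire} history of pivots, not merely its round-$j$ pivot, lies above $p^*$. Everything else is a direct application of the telescoping of weights and the arithmetic $\lceil m/c^j \rceil \cdot \lceil c^j k/m \rceil \ge k$.
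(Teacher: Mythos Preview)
Your proof is correct and follows essentially the same approach as the paper: both lower-bound the prefix weight of the round-$j$ cutoff pivot by summing, over the $\lceil m/c^j\rceil$ survivors, the local prefix weight $\lceil c^jk/m\rceil$, and both use $\lceil m/c^j\rceil\cdot\lceil c^jk/m\rceil\ge k$. The only difference is that you spell out the within-$L_i$ monotonicity explicitly, whereas the paper leaves it implicit in the step ``the prefix weight of $p^\star$ is at least the sum of the local prefix weights of all the pivots in $P_j'$.''
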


\begin{proof}
	Consider the cutoff pivot $p^\star$ of round $j \in [1, \lc \lg_c m \rc]$. Let $P_j'$ be the set of $\lc m/c^j \rc$ pivots of $P_j$ greater than or equal to $p^\star$. The prefix weight of $p^\star$ is at least the sum of the local prefix weights of all the pivots in $P_j'$, which is at least $\lc m/c^j \rc \lc c^j k / m \rc \ge k$.
\end{proof}

We perform a {\em weighted selection} to find the largest pivot $v \in P$ whose prefix weight is at least $k$ ($v$ definitely exists by the previous observation). The algorithm terminates by returning $v$.

\vgap 

The algorithm performs in $O(m\cdot cost_{rank})$ I/Os because the $j$-th round takes $O((m/c^{j-1}) \cdot cost_{rank})$ I/Os (i.e., geometrically decreasing with $j$), while the weighted selection needs only $O(m/B)$ I/Os. Next, we prove that the algorithm is correct, namely, the global rank of $v$ is in $[k, c'k]$ for some constant $c'$ dependent only on $c$.

\begin{observation} \label{obs:app-aurs-prefixweight-of-v}
    The prefix weight of $v$ is at most $(1+2c) k$. 
\end{observation}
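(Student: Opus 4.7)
The plan is to exploit the maximality of $v$ together with a uniform upper bound on the weight of any single pivot. The maximality gives me a ``telescoping'' statement: if I let $v^+$ denote the smallest pivot in $P$ that is strictly greater than $v$ (if any exists), then by definition of $v$ as the \emph{largest} pivot with prefix weight at least $k$, the pivot $v^+$ must have prefix weight strictly less than $k$. Since the prefix weight of $v$ equals that of $v^+$ plus the weight of $v$ itself, it suffices to bound $\text{weight}(v)$ by at most $2ck$. If $v$ happens to be the largest pivot in $P$ (so $v^+$ does not exist), the claim reduces immediately to bounding one weight.

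Next, I would bound the weight of an arbitrary pivot independent of which round produced it. A pivot arising in round $j$ has weight either $\lceil ck/m \rceil$ (for $j=1$) or $\lceil c^j k/m \rceil - \lceil c^{j-1} k/m \rceil$ (for $j>1$); in both cases the quantity is at most $\lceil c^j k/m \rceil \le c^j k/m + 1$. Since the round index satisfies $j \le \lceil \lg_c m \rceil$, we get $c^j \le c \cdot m$, so the weight is at most $ck + 1$. Using $k \ge 1$ and $c \ge 2$ this is comfortably at most $2ck$.

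Putting the two pieces together, $\text{prefix\_weight}(v) = \text{prefix\_weight}(v^+) + \text{weight}(v) < k + 2ck = (1+2c)k$, which establishes the observation. (In the edge case when no $v^+$ exists, the bound $\text{weight}(v) \le 2ck \le (1+2c)k$ finishes things.)

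I do not anticipate a genuine obstacle here: the maximality argument is forced once one realizes that ``prefix weight'' is monotone along the sorted order of pivots, and the weight bound is a short calculation on ceilings together with the round-count bound $j \le \lceil \lg_c m \rceil$. The only things that require minor care are (i) handling the boundary case in which $v$ is the largest element of $P$, and (ii) keeping track of constants so that the crude estimate $ck+1 \le 2ck$ actually applies (which it does under the standing hypotheses $k\ge 1$, $c \ge 2$).
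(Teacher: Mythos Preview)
Your proof is correct and follows the same overall strategy as the paper: pass to the immediate successor $v^+$ (the paper's $v'$), use that its prefix weight is $<k$, and then bound the weight of $v$ itself by $2ck$. The only difference is in the last step: the paper splits into the cases ``$v$ was taken in round $1$'' versus ``$v$ was taken in a later round'' (the latter handled via the local prefix weight of the preceding pivot from the same set $L_{i^*}$ together with the inequality $\sum_i \text{local prefix weight}(p_i') < k$), whereas you give a uniform bound on any pivot's weight using $j \le \lceil \lg_c m \rceil$, hence $c^j \le cm$ and $\text{weight}(v) \le ck+1 \le 2ck$. Your route is a bit more direct and avoids the case split; the paper's argument, by contrast, does not rely on the explicit bound on the number of rounds.
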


\begin{proof}
    We define $v'$ as the smallest pivot in $P$ that is larger than $v$. By definition, we know that the prefix weight of $v'$ is smaller than $k$. Define $p_i'$ as the smallest pivot in $L_i$ that is larger than or equal to $v'$; $p_i' = \nil$ if no such pivot exists. Defining the local prefix weight of a $\nil$ point to be 0, we have: 
    \begin{eqnarray} 
        \textrm{prefix weight of $v'$} &=& \sum_{i=1}^m \textrm{local prefix weight of $p_i'$} \nn \\
        &<& k. \label{eqn:app-aurs-prefixweight-of-v-eq1}
    \end{eqnarray}
    Clearly, it also holds that
    \begin{eqnarray} 
        \textrm{prefix weight of $v$} &=& \textrm{prefix weight of $v'$} + \textrm{weight of $v$} \nn \\ 
        &<& k + \textrm{weight of $v$} \label{eqn:app-aurs-prefixweight-of-v-eq2}
    \end{eqnarray}
    Let $i^*$ be such that $v \in L_{i^*}$. We distinguish two cases: 
    
    \vgap 
    
    {\em Case 1: $p'_{i^*} = \nil$.} This means that $v$ was taken in the first round of our algorithm; hence, its weight is $\lc ck/m \rc < 2ck$ (recall that $c \ge 2$). Therefore, by \eqref{eqn:app-aurs-prefixweight-of-v-eq2} the prefix weight of $v$ is less than $k + 2ck$. 
    
    \vgap 
    
    {\em Case 2: $p'_{i^*} \neq \nil$.} Then, the weight of $v$ is less than $2c$ times the local prefix weight of $p'_{i^*}$. Together with \eqref{eqn:app-aurs-prefixweight-of-v-eq1}, this implies that the weight of $v$ is less than $2c k$. Therefore, once again, \eqref{eqn:app-aurs-prefixweight-of-v-eq2} tells us that the prefix weight of $v$ is less than $k + 2ck$. 
\end{proof}

By Observation~\ref{obs:app-aurs-cutoff} and the definition of $v$, all cutoff pivots are smaller than or equal to $v$. Hence, every $L_i$ has at least one marker smaller than or equal to $v$. We will refer to the largest such marker as the {\em succeeding marker} of $L_i$, and denote it as $e_i$. Note that $e_i$ is not necessarily a pivot.

\vgap

Let $p_i$ be the smallest pivot of $L_i$ that is larger than or equal to $v$. If $p_i$ exists, we say that $L_i$ is {\em pivotal}; otherwise, $L_i$ is {\em non-pivotal}. For a pivotal $L_i$, define:
\begin{itemize}
	\item $r_i =$ the local prefix weight of $p_i$. 

	\item $S_i =$ the set of elements in $L_i$ that are larger than or equal to $v$.
	
	\item $S_i' =$ the set of elements in $L_i$ that are larger than or equal to $p_i$.
\end{itemize}

\begin{observation} \label{obs:app-size}
	$r_i \le |S_i'| \le |S_i| < c^2 \cdot r_i$.
\end{observation}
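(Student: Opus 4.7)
My plan is to verify the three inequalities one at a time, invoking only the specification of the \textsc{Rank} operator and the defining properties of $p_i$, $v$, and $e_i$. The middle inequality $|S_i'| \le |S_i|$ is just set inclusion: since $p_i \ge v$ by the definition of $p_i$, any $x \in L_i$ with $x \ge p_i$ also satisfies $x \ge v$, so $S_i' \subseteq S_i$. For the left inequality, let $j$ denote the round in which $p_i$ was fetched; the \textsc{Rank} operator is invoked with parameter $\rho = c^j k/m$, so the local rank of $p_i$ in $L_i$ lies in $[c^j k/m, c^{j+1} k/m)$. That local rank is exactly $|S_i'|$, which is an integer, so $|S_i'| \ge \lc c^j k/m \rc = r_i$.

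The right inequality $|S_i| < c^2 r_i$ is the main step, and the hard part is pinning down which round produces the succeeding marker $e_i$. All markers of $L_i$ from rounds $1, \ldots, j$ are at least $p_i \ge v$, so the only way one of them could equal the largest marker $\le v$ is the boundary case $p_i = v$, in which $e_i = p_i$. Otherwise $p_i > v$; because $p_i$ is a pivot, $L_i$ remains active into round $j+1$ and yields a marker $e_i^{(j+1)}$. I will argue $e_i^{(j+1)} \le v$: were $e_i^{(j+1)} > v$, it would either be a pivot (contradicting the minimality of $p_i$ among pivots $\ge v$) or fail to be a pivot (forcing $L_i$ inactive from round $j+2$ on, so every marker of $L_i$ exceeds $v$, contradicting the sentence preceding the observation that every $L_i$ admits some marker $\le v$). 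Since later markers from $L_i$ are strictly smaller than $e_i^{(j+1)}$, this pins down $e_i = e_i^{(j+1)}$. In either scenario the local rank of $e_i$ in $L_i$ is strictly below $c^{j+2} k/m$, and combined with $e_i \le v$ this gives $|S_i| \le |\{x \in L_i : x \ge e_i\}| < c^{j+2} k/m = c^2 \cdot c^j k/m \le c^2 \lc c^j k/m \rc = c^2 r_i$. Once the round of $e_i$ is localized to $\{j, j+1\}$, everything else is simply plugging in the rank interval promised by the \textsc{Rank} operator.
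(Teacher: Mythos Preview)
Your argument is correct and follows the same skeleton as the paper's proof: identify the round $j$ of $p_i$, read off $|S_i'| \ge \lceil c^j k/m \rceil = r_i$ from the \textsc{Rank} guarantee, and bound $|S_i|$ via the local rank of the succeeding marker $e_i$. The paper simply asserts ``$e_i$ was taken in the $(j{+}1)$-st round'' without further comment; you supply the missing justification by treating the boundary case $p_i = v$ separately and, for $p_i > v$, arguing by cases on whether the round-$(j{+}1)$ marker is a pivot. Your version is thus a strictly more careful rendering of the same proof. One tiny point left implicit in both write-ups: when $p_i > v$ one needs round $j{+}1$ to exist, i.e.\ $j < \lceil \lg_c m \rceil$; this holds because the last round has exactly one pivot (namely its cutoff pivot), which is $\le v$, so a pivot strictly above $v$ cannot come from the final round.
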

\begin{proof}
	Let $j$ be such that $p_i$ was taken in the $j$-th round of our algorithm. $|S_i'|$ is exactly the local rank of $p_i$, which must fall in $[c^j k / m, c^{j+1}k/m)$. Hence, it follows that $|S_i'| \ge \lc c^jk/m\rc = r_i$. 

	\vgap 
	
	As $e_i$ was taken in the $(j+1)$-st round, its local rank is less than $c^{j+2}k/m$. Since the local rank $e_i$ is an upper bound of $|S_i|$, it follows that $|S_i| < c^{j+2}k/m \le c^2 r_i$. 
\end{proof}

In the pivotal sets, the total number of elements larger than or equal to $v$ equals:
\begin{eqnarray}
	\sum_{\textrm{$i$ s.t.\ $L_i$ is pivotal}} |S_i|
	&\le&
	c^2 \sum_{\textrm{$i$ s.t.\ $L_i$ is pivotal}} r_i \nn \\
	&=&
	c^2 \cdot (\textrm{prefix weight of $v$}) \nn \\
	\textrm{(by Observation~\ref{obs:app-aurs-prefixweight-of-v})} &<&
	c^2(1+2c) k. \nn
\end{eqnarray}
Each non-pivotal set $L_i$ has less than $c^2 k/m$ elements larger than or equal to $v$. It thus follows that the global rank of $v$ is less than $(c^2 k / m) m + c^2(1+2c) k = c^2 (2+2c) k$.

\vgap 

On the lower side, the global rank of $v$ is at least 
\begin{eqnarray} 
    \sum_{\textrm{$i$ s.t.\ $L_i$ is pivotal}} |S_i|
    &\ge& 
    \sum_{\textrm{$i$ s.t.\ $L_i$ is pivotal}} r_i \nn \\
    &=&
	\textrm{prefix weight of $v$} \nn \\
	&\ge& k. \nn 
\end{eqnarray}

\extraspacing {\bf Case \bm$k < m$.} From each $L_i$, we use \textsc{Max} to request the largest element in $L_i$. Let $P'$ be the set of elements fetched (i.e., one from each $L_i$). Obtain the $k$-th largest element $v'$ in $P'$. Make set $L_i$ inactive if its largest element is smaller than $v'$; otherwise, $L_i$ is active. Run the above algorithm on the $k$ active sets. Suppose that the algorithm outputs $v$. We then return $\max\{v, v'\}$ as the final answer. It is easy to prove that the algorithm is correct, and runs in $O(m (cost_{rank} + cost_{max}))$ I/Os.


\section*{The Update Algorithm in Section~\ref{sec:smallk-str}}

To support updates, for each internal node $u$, build a B-tree on the scores in {\bm$G$}$_u$. For each leaf node $z$, build a B-tree on the scores of the elements in $S_z$. Refer to these B-trees as {\em score B-trees}. Denote by $parent(u)$ the parent of $u$.

\vgap

To insert a point $e$ in $S$, first descend a root-to-leaf path $\pi$ to the leaf node $z$ whose x-range covers $e$. At $z$, update all its secondary structures in $O(\lg^2_B b) = O((\lg_B \lg n)^2) = O(\lg_B n)$ amortized I/Os. Next, we fix the secondary structures of the nodes along $\pi$ in a bottom up manner. If $\score(e)$ enters $G_z$, at $parent(z)$, delete the lowest score in $G_z$, and then insert $\score(e)$ in $G_z$. The secondary structures of $parent(z)$ are then updated accordingly. In general, after updating an internal node $u$, we check using the score B-tree of $u$ whether $\score(e)$ should enter $G_u$. If so, at $parent(u)$, delete the lowest score in $G_u$, insert $\score(e)$ in $G_u$, and update the secondary structures of $parent(u)$. By Lemma~\ref{lmm:smallk-fl-prob}, we spend $O(\lg_B (fl))$ amortized I/Os at each node, and hence, $O(\lg_B n)$ amortized I/Os in total along the whole $\pi$.

\vgap

We now explain how to handle node splits. Suppose that a leaf node $z$ splits into $z_1, z_2$. First, build the secondary structures of $z_1$ and $z_2$ in $O(b \lg^2_B b)$ I/Os. At $v = parent(z)$, destroy $G_z$, and include $G_{z_1}$ and $G_{z_2}$ into {\bm$G$}$_v$. Rebuild all the secondary structures at $v$ in $O(fl \cdot \lg_B (fl)) = O(b \lg_B b)$ I/Os (Lemma~\ref{lmm:smallk-fl-prob}). This cost can be amortized over the $\Omega(b)$ updates that must have taken place in $z$, such that each update is charged only $O(\lg^2_B b) = O(\lg_B n)$ I/Os.

\vgap

A split at an internal level can be handled in a similar way. Suppose that an internal node $u$ splits into $u_1, u_2$. Divide {\bm$G$}$_{u}$ into {\bm$G$}$_{u_1}$ and {\bm$G$}$_{u_2}$ in $O(fl/B)$ I/Os, and then rebuild the secondary structures of $u_1, u_2$ in $O(fl \cdot \lg_B (fl))$ I/Os. After discarding $G_u$ but including $G_{u_1}, G_{u_2}$, we rebuild the secondary structures of $parent(u)$ in $O(fl \cdot \lg_B (fl))$ I/Os. On the other hand, $\Omega(fl)$ updates must have taken place in the subtree of $u$ (recall that the base tree is a WBB-tree). Hence, each of those updates bears $O(\lg_B (fl))$ I/Os for the split cost. As an update bears such cost for at most one node per level, the amortized update cost increases by only $O(\lg_B n)$.

\vgap

An analogous algorithm can be used to handle a deletion in $O(\lg_B n)$ amortized I/Os. After $n$ has doubled or halved, we destroy the entire structure, reset $N$ to $2n$, and rebuild everything in $O(n \lg_B n)$ I/Os. The amortized update cost is therefore $O(\lg_B n)$. 

\end{sloppy}
\end{document}